\title{Enhanced area law in the Widom--Sobolev formula for the free Dirac operator in arbitrary dimension}
\author[L.\ Bollmann]{Leon Bollmann}
\address[L.\ Bollmann]{Mathematisches Institut,
  Ludwig-Maximilians-Universit\"at M\"unchen,
  Theresienstra\ss{e} 39,
  80333 M\"unchen, Germany}
\email{bollmann@math.lmu.de}
\author[P.\ M\"uller]{Peter M\"uller}
\address[P.\ M\"uller]{Mathematisches Institut,
 Ludwig-Maximilians-Universit\"at M\"unchen,
  Theresienstra\ss{e} 39,
  80333 M\"unchen, Germany}
\email{mueller@lmu.de}
\begin{document}

\begin{abstract}
	We prove a logarithmically enhanced area law for all R\'enyi entanglement entropies of the ground state 
	of a free gas of relativistic Dirac fermions. Such asymptotics occur in any dimension if the modulus of 
	the Fermi energy is larger than the mass of the particles and in the massless case at Fermi energy zero 
	in one space dimension. In all other cases of mass, Fermi energy and dimension, the entanglement entropy 
	grows no faster than the area of the involved spatial region. The result is established for a general class of test functions which includes the ones corresponding to R\'enyi entropies and relies on a recently proved extension of the Widom--Sobolev formula to matrix-valued symbols by the authors. 
\end{abstract}

\maketitle

\section{Introduction}

The last decades have witnessed considerable progress in our understanding of Szeg\H{o} asymptotics for 
traces of truncated Wiener--Hopf operators in multi-dimensional space. A prime achievement is the proof of Widom's 
conjecture by Sobolev \cite{sobolevlong} with various extensions to more general situations \cite{sobolevschatten, sobolevpw, sobolevc2}.
Such methods have proved to be a very powerful mathematical tool in the study of 
entanglement properties of ground states of non-relativistic quasi-free Fermi gases. Most notably, we mention the proof 
of a logarithmically enhanced area law for the R\'enyi entropies of the reduced ground state of a free 
Fermi gas \cite{LeschkeSobolevSpitzerLaplace}, which relies on the Widom--Sobolev formula. Generalisations to positive temperatures and towards the inclusion of electric and magnetic background fields were treated, e.g., in \cite{LeschkeSobolevSpitzerLaplace, PhysRevLett.113.150404, ElgartPasturShcherbina2016, LeschkeSobolevSpitzerMultiScale, Dietlein18, MuPaSc19, MuSc20, LeschkeSobolevSpitzerMagnetic, Pfeiffer21, LeschkeSobolevSpitzerTemperature, 10.1063/5.0135006, PfSp22, PfSp23}.

It is therefore a natural question whether there also exists a logarithmically enhanced area law for relativistic quasi-free Fermi gases, primarily for the case of a free Dirac operator as the single-particle Hamiltonian. The recent paper \cite{FinsterSob}, which is also devoted to a mathematical study of entanglement entropies for a free Fermi gas governed by the Dirac operator, concentrates on the case of Fermi energy $E_{F}=0$ and finds a strict area law in dimension $d=3$, even in the massless case, $m=0$, where energy zero does not lie in a gap in the spectrum.
The Dirac equation in globally hyperbolic spacetimes with applications to black-hole entropy is discussed in \cite{FinsterI}.
Entanglement signatures of Dirac systems are also a current topic of interest in the physics literature \cite{10.1088/1742-5468/aa668a, 10.1126/sciadv.aat5535, Crepel.Hackenbroich.2021}.

Our main focus in this paper is on situations which lead to a logarithmically enhanced area law for the free Dirac system. 
As in the case for the Laplacian \cite{LeschkeSobolevSpitzerLaplace}, we would like to trace the problem back to 
a suitable Widom--Sobolev formula which, however, in this case is required to hold for matrix-valued symbols. The derivation of such a formula has been performed in our previous paper \cite{BM-Widom} where we follow 
the path laid out by Sobolev \cite{sobolevlong, sobolevc2} and deal with the additional complications posed by matrix-valued symbols, most notably issues of non-commutativity.
We also refer to \cite{BM-Widom} for an account of the literature on Szeg\H{o} asymptotics for matrix-valued symbols. 

The main results of this paper are summarised in Theorem~\ref{Dirac}. 
These are logarithmically enhanced area laws for the trace of rather general functions -- including all R\'enyi 
entropy functions -- of a spatially truncated Fermi projection of the free Dirac operator. 
Such asymptotics occur in the cases 
$|E_{F}| > m$ in all space dimensions $d\in\N$ and $E_{F}=m=0$ in $d=1$. In the complementary cases, 
i.e.\ $|E_{F}| \le m \neq 0$ for any $d$ and $E_{F}=m=0$ in $d\ge 2$, the coefficient of the enhanced 
area law 
vanishes, and we perform some further analysis to show that the corrections grow at most of the order of the 
area. While such a behaviour is not at all surprising in the spectral-gap case $|E_{F}| \le m \neq 0$, 
the interesting case $E_{F}=m=0$ in $d\ge 2$ requires to control the point singularity of the symbol at the origin in 
the sense that it is too low dimensional. The paper \cite{FinsterSob} studies the coefficient of the arising area law in great 
detail. We have included our corresponding estimate as Lemma~\ref{Schatten-Estimate} in the present paper both for completeness and because we discovered it independently. This is the only overlap with \cite{FinsterSob}.
From what has been said above, the point singularity of the symbol at the origin is strong enough, however, to induce a logarithmic enhancement  in  $d=1$. 

The core of the proof of Theorem~\ref{Dirac} consists in an application of the 
Widom--Sobolev formula \cite[Thm.~4.22]{BM-Widom} for matrix-valued symbols. In order to achieve this, one has to cope with several difficulties that are not present in the corresponding non-relativistic case for the Laplacian \cite{LeschkeSobolevSpitzerLaplace}. Firstly, unboundedness of the Dirac operator from below prevents the relevant operators from being trace class, and a suitable cut-off of the Fermi sea at negative energies has to be introduced. In order for this to be a meaningful procedure, we consider a setting in which the leading-order coefficient of the asymptotics turns out to be even independent of the cut-off. Secondly, the presence of two spectral bands at positive and negative energies leads to complications in the geometric structure of the relevant regions in momentum space. For this reason we distinguish several different cases in the proof, depending on the relevant regions in momentum space. Whereas for the Laplacian \cite{LeschkeSobolevSpitzerLaplace} the relevant region in momentum space is the interior of the bounded Fermi ball, the relevant momenta for the Dirac operator also lie in the unbounded complement of a ball. 
In order to obtain well-defined, i.e.\ finite, expressions in such a situation, the support of the symbol needs to be compact, which is guaranteed by the cut-off.
Moreover, in the Laplacian case the symbol always vanishes on one side of the discontinuity in momentum space. 
Here, we need to cope with symbols that take on non-zero values on both sides of the discontinuity.
The decomposition of the symbol in \eqref{Dirac-Application-Function} is tailored to reduce the analysis to an application of 
the Widom--Sobolev formula in \cite{BM-Widom}.
We emphasise that this relies on the particular matrix structure of the Dirac operator.

In the next section we introduce all relevant notions in order to formulate the main result, 
Theorem~\ref{Dirac}, at the beginning of Section~3. The remaining part of Section~3 is devoted to its proof. 


\section{Notation and preliminaries}
\label{sec:prelim}

\subsection{Admissible domains}

\begin{defn}\label{Definition-Admissible-Domain}
Given a natural number $d\geq 2$, we call $\Omega\subset\R^d$ an \emph{admissible domain}, if it can be locally represented as the graph of a Lipschitz function. By this we mean that for every $x\in\R^d$ there exists a radius $r_x>0$, a Lipschitz function $\Phi_x:\R^{d-1}\rightarrow\R$ and a choice $\R^{d} \ni y=(y_1,y_2,\ldots,y_d)$ of Cartesian coordinates
 such that 
\begin{align}\label{Def-Local-Representation}
B_{r_{x}}(x)\cap\Omega = B_{r_{x}}(x)\cap\big\{y\in\R^d : y_d>\Phi_x(y_1,y_2,\cdots,y_{d-1})\big\},
\end{align}
where $B_{r_{x}}(x)$ denotes the open ball  in $\R^{d}$ of radius $r_x$ about $x$.

For $m\in\N\cup\{\infty\}$, we further call an admissible domain a \emph{piece-wise $C^m$-admissible domain}, if the functions $\Phi_x$ are not only Lipschitz but also piece-wise $C^m$-functions.

In the case $d=1$ we call $\Omega\subset\R$ a (piece--wise $C^m$-) admissible domain, if it can be locally represented as an open interval of the form  $\,]a,\infty[\,$, respectively $\,]-\infty,a[\,$, for arbitrary $a\in\R$. 
\end{defn}

\begin{rem}
	\label{Remark-Complement-Admissible}
	\begin{enumerate}[(a)]
	\item
		Given two piece-wise $C^m$-admissible domains $\Omega\subset\Omega'$ such that $\dist(\Omega,\partial\Omega')>0$, the domain $\Omega'\setminus\overline{\Omega}$ is also piece-wise $C^m$-admissible. Here, $\overline{\Omega}$ denotes the closure of $\Omega$.
	\item
	The boundary of an admissible domain has Lebesgue-measure zero. Therefore, the indicator function $1_{\Omega}$ of some admissible domain $\Omega$ is equivalent in Lebesgue sense with the indicator function $1_{\overline{\Omega}}$. In particular in the situation of part (a) of this remark, the indicator function $1_{\Omega'\setminus\Omega}$ can be considered as the indicator function of a piece-wise $C^m$-admissible domain.
	\end{enumerate}
\end{rem}


\subsection{The free Dirac operator}

The self-adjoint Hamiltonian of the Dirac equation in 
$d\in\N$ space dimensions is given by 
\begin{equation}
	\label{Dirac-op}
	D:= -\i\sum_{k=1}^d \alpha_k \partial_k +m\beta,
\end{equation}
see e.g. \cite{10.1007/bf02754212, 10.1063/1.1367331}. It is densely defined in the product Hilbert space $L^{2}(\R^{d}) \otimes \C^{n_{d}}$ of square-integrable
vector-valued functions with spinor dimension
$n_{d}:= 2^{\lfloor (d+1)/2\rfloor}$. Here, $\lfloor r\rfloor$ stands for the largest integer not 
exceeding $r\in\R$. Moreover,
$\i$ is the imaginary unit, $m\ge 0$ is the mass of the particle, $\partial_{k}$, $k=1,\ldots,d$,
denotes partial differentiation with respect to the $k$-th Cartesian coordinate and 
$\beta,\alpha_{1},\ldots, \alpha_{d} \in\C^{n_{d}\times n_{d}}$ are Dirac matrices 
which anti-commute pairwise and whose square gives the $n_{d}\times n_{d}$-unit matrix $\mathbb{1}_{n_{d}}$.

The Dirac matrices can be chosen as (cf.\ \cite[Appendix]{10.1063/1.1367331})
\begin{align}
\alpha_j:=\begin{pmatrix}
0 & \sigma_j \\
 \sigma_j^* & 0
\end{pmatrix}, \;\; j=1,\ldots,d, \;\;  \text{and} \;\;  \beta := \begin{pmatrix}
\mathbb{1}_{\frac{n_{d}}{2}} & 0 \\
 0 & -\mathbb{1}_{\frac{n_{d}}{2}}
\end{pmatrix},
\end{align}
where the $\frac{n_{d}}{2}\times \frac{n_{d}}{2}$-matrices $\sigma_{j}$ satisfy the anti-commutation relations
\begin{align}
\label{sigma-anticomm}
\sigma_j \sigma_k^* + \sigma_k \sigma_j^* = 2\delta_{jk}\mathbb{1}_{\frac{n_{d}}{2}}
\;\; \text{and} \;\; \sigma_j^* \sigma_k + \sigma_k^* \sigma_j = 2\delta_{jk}\mathbb{1}_{\frac{n_{d}}{2}}
\end{align}
for all $j,k=1,\ldots,d$.
In the case that $d$ is odd, the matrices $\sigma_j$ are Hermitian for all $j=1,\ldots,d$.
In \eqref{sigma-anticomm}, $*$ denotes the adjoint and $\delta_{jk}:=1$ if $j=k$ and $0$ otherwise the Kronecker delta. 
 
The Hamiltonian given in \eqref{Dirac-op} is unitarily equivalent via the Fourier transform to the operator of multiplication on $L^{2}(\R^{d}) \otimes \C^{n_{d}}$ with the matrix-valued symbol 
\begin{align}
\R^{d} \ni \xi = (\xi_{1},\ldots,\xi_{d}) \mapsto \mathcal{D}(\xi):= \sum_{k=1}^d \alpha_k \xi_k +m\beta.
\end{align}
Clearly, the symbol $\mathcal{D}$ is smooth, i.e.\ $\mathcal{D} \in C^{\infty}(\R^{d})$.  

We want to study the application of measurable functions to this operator. In order to do so it will be convenient to diagonalise the matrix $\mathcal{D}(\xi)$ for every $\xi\in\R^{d}$. The matrix $\mathcal{D}(\xi)$ has both $E(\xi):=\sqrt{m^2+\xi^2}$ and $-E(\xi)$ as eigenvalues of multiplicity $\frac{n_{d}}{2}$. Therefore it can be diagonalised as
\begin{align}
U(\xi)\mathcal{D}(\xi)U^{-1}(\xi)=E(\xi)\beta,
\end{align}
where $U(\xi)$ is a unitary matrix. We now want to calculate $a(\mathcal{D}(\xi))$ for some measurable function $a:\R\rightarrow\C$.
Using the diagonalisation above, we have
\begin{align}
\label{symbol-diag}
a\big(\mathcal{D}(\xi)\big)=a\big(U^{-1}(\xi)E(\xi)\beta U(\xi)\big)=U^{-1}(\xi)a\big(E(\xi)\beta\big)U(\xi).
\end{align}
The matrix $E(\xi)\beta$ is diagonal with entries $E(\xi)$ and $-E(\xi)$. We define two functions $a_{\pm}$ on momentum space by
\begin{equation}
\label{a-pm-def}
a_+(\xi):= a\big(E(\xi)\big), \qquad a_-(\xi):= a\big(-E(\xi)\big).
\end{equation}
Then, \eqref{symbol-diag} can be written as
\begin{align}\label{Dirac-Application-Function}
a\big(\mathcal{D}(\xi)\big)=U^{-1}(\xi)a\big(E(\xi)\beta\big)U(\xi)=&\frac{1}{2}U^{-1}(\xi)\Big((a_++a_-)\mathbb{1}_{n_{d}}+ (a_+-a_-)\beta\Big) U(\xi)\nonumber\\=&\frac{1}{2}(a_++a_-)(\xi)\mathbb{1}_{n_{d}}+\frac{1}{2}(a_+-a_-)(\xi)\frac{\mathcal{D}}{E}(\xi)
\end{align}
for every $\xi\in\R^{d}$, where we define $\frac{\mathcal{D}}{E}(0):=0$ in the case $m=0$. 

\begin{rem}\label{Remark-Application-Function}
\begin{enumerate}[(a)]
\item In order for the notation in \eqref{Dirac-Application-Function} to make sense, we identify the matrix  $\mathbb{1}_{n_{d}}$ with a constant matrix-valued symbol. The multiplication of a scalar-valued symbol with a matrix-valued symbol is defined point-wise by scalar multiplication on the tensor product. These notational conventions will be used extensively throughout the paper. 
\item If the function $a$ in \eqref{Dirac-Application-Function} is (essentially) bounded, the symbol $a(\mathcal{D})$ is also (essentially) bounded. Then, for every $L>0$ the operator $\Op_{L}(a(\mathcal{D}))$, whose action on some $u\in L^{2}(\R^{d}) \otimes \C^{n_{d}}$ at a point $x\in\R^d$ is given by 
\begin{align}
\label{Definition-OpL}
	\big(\Op_L(a(\mathcal{D}))u\big)(x):= \left(\frac{L}{2\pi}\right)^d \int_{\R^d} \!\int_{\R^d} \e^{\i L\xi(x-y)}\mathcal{D}(\xi)u(y)\d y\d\xi,
\end{align}
is a well-defined and bounded operator on $L^{2}(\R^{d})\otimes\C^n$. We refer to \cite{BM-Widom} for more details on pseudodifferential operators in this context. We then write the application of the function $a$ to the Hamiltonian $D$ as a pseudo-differential operator with matrix-valued symbol $a(\mathcal{D})$, i.e. $a(D)=\Op_{1}(a(\mathcal{D}))$. 
\item \label{Remark-Compact-Support} If the function $a$ in \eqref{Dirac-Application-Function} is compactly supported, the functions $a_{+}$ and $a_{-}$ are also compactly supported. Therefore, the symbol $a(\mathcal{D})$ is compactly supported in this case.
\item 
	If there exists $\tilde{a}: \R\rightarrow\C$ measurable such that its support satisfies 
	$\supp(a-\tilde{a}) \subseteq [-m,m]$, then, by virtue of \eqref{Dirac-Application-Function}, the equality 
	$\tilde{a}(\mathcal{D}) = a(\mathcal{D})$ holds Lebesgue-almost everywhere, in particular we have 
	$\Op_{L}\big(\tilde{a}(\mathcal{D})\big) = \Op_{L}\big(a(\mathcal{D})\big)$.	
\item If the function $a$ in \eqref{Dirac-Application-Function} is smooth on $\ran \pm E = \,]-\infty,-m] \cup [m, \infty[\,$ (at the boundaries to be understood as one-sided derivatives), 
then the functions $a_{+}$ and $a_{-}$ are smooth on $\R^{d}$. 
In this case, the symbol $a(\mathcal{D})$ is smooth if $m>0$ because 
this guarantees smoothness of the symbol $\frac{\mathcal{D}}{E}$ on $\R^{d}$. If $m=0$, the additional assumption $a(0)=0$ also implies smoothness of  $a(\mathcal{D})$ despite the discontinuity of the symbol $\frac{\mathcal{D}}{E}$ at the origin. 
\end{enumerate}
\end{rem}

For the applications in this paper the following smoothly truncated version of the Fermi projection is of particular interest.
Given a Fermi energy $E_F \in\R$ and an ultraviolet cut-off parameter $b\in [0,\infty[\,$, we define 
\begin{equation}\label{Definition-Chi-E_F}
\chi_{E_F}^{(b)}:\R\rightarrow [0,1],\quad x\mapsto \chi_{E_F}^{(b)}(x):=1_{\{y\in\R\,:\, y<E_F\}}(x)\varphi_{E_{F}}(x+b)
\end{equation} 
in terms of the monotone cut-off function
$\varphi_{E_{F}} :=\varphi\in C^\infty(\R)$ obeying $\varphi|_{[-|E_F|,\infty[}=1$ and $\varphi|_{]-\infty,-|E_F|-1]}=0$. Note that the function $\chi_{E_F}^{(b)}$ is bounded and compactly supported.


\subsection{Test functions and entanglement entropy}

The test functions we consider are of the following type.
\begin{ass}
	\label{H-Functions}
	Let $\gamma \in \,]0,1[\,$ and let $X: =\{x_1,x_2,\ldots,x_N\}\subset\R,N\in\N,$ be a finite collection of different points 
	on the real line. Let $U_j\subset\R$, $j\in\{1,\ldots , N\}$, be pairwise disjoint 
	neighbourhoods of the points $x_j\in X$. Given a function 
	$h\in C(\R) \cap C^2(\R \setminus X)$, we assume the existence of a constant $C>0$ such that for every  $k\in\{0,1,2\}$ the estimate 
	\begin{align}\label{H-Functions-Bound}
		\Big|\frac{\d^k}{\d x^k}\big[h-h(x_j)\big](x)\Big|\leq C |x-x_j|^{\gamma-k}, 
	\end{align}
	holds for every $x\in U_j\setminus\{x_j\}$ and every $j\in\{1,\ldots , N\}$. In particular, this 
	implies that $h$ is H\"older continuous at the points of $X$.
\end{ass}

A special case of these functions are the R\'{e}nyi entropy functions. For a given parameter $\alpha\in\, ]0,\infty[\,\setminus \{1\}$ the function $h_{\alpha}:\R\rightarrow [0,\log 2]$ is given by
\begin{align}
h_{\alpha}(t):=\frac{1}{1-\alpha}\log [t^\alpha+(1-t)^\alpha]
\end{align}
for $t\in [0,1]$ and by $h_{\alpha}(t):=0$ elsewhere. Here $\log$ denotes the natural logarithm. In the case $\alpha=1$ the von Neumann entropy function $h_1$ is given by 
\begin{align}
h_{1}(t):=\lim_{\alpha\rightarrow 1}h_{\alpha}(t)=-t\log t-(1-t)\log (1-t)
\end{align}
for $t\in\, ]0,1[\,$ and by $h_{1}(t):=0$ elsewhere. We note that $h_{\alpha}(0)=0=h_{\alpha}(1)$ for all $\alpha>0$.

We now formulate the main object studied in this paper. Let $\Lambda,\Lambda' \subset\R^{d}$ with $\Lambda \subset\Lambda'$ be bounded piece-wise $C^{1}$-admissible domains. Given a subset $\Omega\subseteq \R^{d}$ and $L>0$, we introduce the scaled subset 
$\Omega_{L}:= \{L x: x\in \Omega\}$. We are interested in the following trace
\begin{align}\label{Paper-Goal}
\tr_{L^2(\R^d)\otimes\C^{n_{d}}} 
			\Big[h\big(\mathbf{1}_{\Lambda_{L}}\chi_{E_F}^{(b)}(D)\mathbf{1}_{\Lambda_{L}}\big) 
			+ h\big(\mathbf{1}_{\Lambda'_{L}\setminus\Lambda_{L}}\chi_{E_F}^{(b)}(D)
				\mathbf{1}_{\Lambda'_{L}\setminus\Lambda_{L}}\big)
			- h\big(\mathbf{1}_{\Lambda'_{L}}\chi_{E_F}^{(b)}(D)\mathbf{1}_{\Lambda'_{L}}\big)\Big],
\end{align}
for fixed Fermi energy $E_F\in\R$, mass $m\geq 0$ and energy cut-off $b\ge 0$. Here,  
$\mathbf{1}_{\Omega} := 1_{\Omega} \otimes \mathbb{1}_{n_{d}}$ acts as the multiplication operator on 
$L^{2}(\R^{d}) \otimes \C^{n_{d}}$ by the corresponding matrix-valued indicator function. If the function $h$ is chosen to be one of the R\'{e}nyi entropy functions, the trace \eqref{Paper-Goal} is related to the relative local R\'{e}nyi entropy corresponding to the domains $\Lambda$ and $\Lambda'$.
 
As discussed in the last section, we can rewrite each of the three terms in \eqref{Paper-Goal} in terms of a pseudo-differential operator, i.e. 
\begin{align}\label{Rewrite-PDO}
\chi_{E_F}^{(b)}(D) =\Op_{1}\big(\chi_{E_F}^{(b)}(\mathcal{D})\big),
\end{align}
where $\Omega$ is either $\Lambda,\Lambda'$ or $\Lambda'\setminus \Lambda$. As all of these domains are bounded and the symbol $\chi_{E_F}^{(b)}(\mathcal{D})$ is compactly supported, cf.\ Remark \ref{Remark-Application-Function} (\ref{Remark-Compact-Support}), all the operators in \eqref{Paper-Goal} are trace-class, see \cite[Chap.\
11, Sect.\ 8, Thm.\ 11]{BirmanSolomjak}. As the symbol $\chi_{E_F}^{(b)}(\mathcal{D})$ only depends on momentum space, the operator $\mathbf{1}_{\Omega_{L}}\Op_1\big(\chi_{E_F}^{(b)}(\mathcal{D})\big)\mathbf{1}_{\Omega_{L}}$ is unitarily equivalent to the operator $\mathbf{1}_{\Omega}\Op_L\big(\chi_{E_F}^{(b)}(\mathcal{D})\big)\mathbf{1}_{\Omega}$ by the dilatation $U_L$ on $L^2(\R^d)\otimes \C^n$, where $(U_L u)(x):=L^{\frac{d}{2}} u(Lx)$ for all $u\in L^2(\R^d)\otimes \C^n$.
Therefore, we infer
\begin{align}\label{Rewrite-L-PDO}
 \tr_{L^2(\R^d)\otimes\C^{n_{d}}} 
			\Big[h\big(\mathbf{1}_{\Omega_{L}}\Op_{1}\big(\chi_{E_F}^{(b)}(\mathcal{D})\big)\mathbf{1}_{\Omega_{L}}\big)\Big]= \tr_{L^2(\R^d)\otimes\C^{n_{d}}} 
			\Big[h\big(\mathbf{1}_{\Omega}\Op_{L}\big(\chi_{E_F}^{(b)}(\mathcal{D})\big)\mathbf{1}_{\Omega}\big)\Big].
\end{align}


\section{Main theorem and its proof}
\label{sec:main}
We are now able to state our main theorem.

\begin{thm}
	\label{Dirac}
	Let $\Lambda\subset\Lambda'$ be bounded piece-wise $C^1$-admissible domains in $\R^{d}$ in the sense of 
	Definition \ref{Definition-Admissible-Domain} and such that $\dist(\Lambda,\partial\Lambda')>0$. 
	Consider the Dirac operator \eqref{Dirac-op} with mass $m\ge0$ and fix a Fermi energy $E_{F}\in\R$ 
	and an ultraviolet cut-off parameter $b \ge0$.
	Let $h\in C(\R)$ satisfy Assumption \ref{H-Functions} and $h(0)=0$. Then, the asymptotic trace formula
	\begin{multline}
		\label{Dirac-stat}
		\frac{1}{n_{d}}\tr_{L^2(\R^d)\otimes\C^{n_{d}}} 
			\Big[h\big(\mathbf{1}_{\Lambda_{L}}\chi_{E_F}^{(b)}(D)\mathbf{1}_{\Lambda_{L}}\big) 
			+ h\big(\mathbf{1}_{\Lambda'_{L}\setminus\Lambda_{L}}\chi_{E_F}^{(b)}(D)
				\mathbf{1}_{\Lambda'_{L}\setminus\Lambda_{L}}\big)
			- h\big(\mathbf{1}_{\Lambda'_{L}}\chi_{E_F}^{(b)}(D)\mathbf{1}_{\Lambda'_{L}}\big)\Big] \\ 
		= L^{d-1}\log L \  W(h,\Lambda,E_F,m) + o(L^{d-1}\log L)
	\end{multline}
	holds as $L\rightarrow\infty$. 
	The coefficient $W(h,\Lambda,E_F,m)$ is independent of the cut-off $b$, the 
	domain $\Lambda'$ and the spinor dimension $n_{d}$. Moreover:
	\begin{enumerate}[{\upshape(a)}]
	\item 
		If $|E_F|>m$, then the coefficient of this enhanced area law is given by
		\begin{align}
			\label{Coefficient-Enhanced}
			W(h,\Lambda,E_F,m):= \frac{\Phi(\Lambda,E_F,m)}{(2\pi)^2}\int_0^1 \frac{h(t)-h(1)t}{t(1-t)}\;\d t 
		\end{align}
		with the geometric factor 
		\begin{align}
			\label{geom-factor}
			\Phi(\Lambda,E_F,m):= \LEFTRIGHT\{.{\begin{array}{ll}
     		2|\partial\Lambda|, & \text{if} \ \ d=1, \\[1ex]
					\displaystyle
     			\frac{1}{(2\pi)^{d-1}}\int_{\partial\Lambda}\int_{\partial B_{p_F}} \!
					\vert n_{\partial \Lambda}(x)\cdot n_{\partial B_{p_F}}(\xi)\vert\, \d S(\xi) \d S(x), 
							& \text{if } 	\  d\geq 2,
   		 \end{array}}
		\end{align}
		where, in one dimension, $|\partial \Lambda|$ is the number of boundary points of $\Lambda$. In 
		dimensions $d\geq 2$, we write $B_{p_F} := B_{p_{F}}(0)$, where $p_F:=\sqrt{E_F^2-m^2}$ 
		is the relativistic Fermi momentum, and $n_{\partial\Lambda}$, resp.\ $n_{\partial B_{p_F}}$, denotes the vector field of exterior unit normals in $\R^{d}$ to $\partial \Lambda$, resp.\ $\partial B_{p_{F}}$. We write $\d S$ for integration with respect to the $(d-1)$-dimensional surface measure induced by Lebesgue measure in $\R^{d}$. 
	\item 
		If $|E_F|\leq m\neq 0$, then $W(h,\Lambda,E_F,m)=0$, and the next term in the asymptotic expansion is of order $O(L^{d-1})$ as $L\to\infty$. 
	\item 
		If $E_F=m=0$, then the behaviour depends on the dimension. If $d=1$ an enhanced area law holds with the same coefficient \eqref{Coefficient-Enhanced} and \eqref{geom-factor} as in {\upshape(a)}. If instead $d\geq 2$, the situation is as in {\upshape(b)}.
	\end{enumerate}
\end{thm}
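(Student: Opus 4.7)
My strategy is to reduce the three-term trace in \eqref{Dirac-stat} to an application of the matrix-valued Widom--Sobolev formula \cite[Thm.~4.22]{BM-Widom}, using the decomposition \eqref{Dirac-Application-Function}. Via \eqref{Rewrite-L-PDO} and Remark~\ref{Remark-Complement-Admissible}, each summand in \eqref{Dirac-stat} becomes a trace of $h$ applied to $\mathbf{1}_{\Omega}\Op_{L}(\chi_{E_F}^{(b)}(\mathcal{D}))\mathbf{1}_{\Omega}$ for $\Omega\in\{\Lambda,\,\Lambda'\setminus\overline{\Lambda},\,\Lambda'\}$, each of which is piece-wise $C^{1}$-admissible. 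The symbol then splits as a scalar part $\tfrac{1}{2}(a_+ + a_-)\mathbb{1}_{n_d}$ plus a genuinely matrix-valued part $\tfrac{1}{2}(a_+ - a_-)\tfrac{\mathcal{D}}{E}$, with $a_\pm$ as in \eqref{a-pm-def}. A key preliminary observation is that $\varphi_{E_F}\equiv 1$ on $[-|E_F|,\infty[\,$, so the smooth cut-off is trivial in any neighbourhood of the jump of the indicator $1_{\{y<E_F\}}$. Therefore the discontinuities of $a_\pm$ are determined entirely by the indicators, and in case (a) exactly one of $a_+,a_-$ jumps across the smooth Fermi sphere $\{|\xi|=p_F\}$.

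In case (a) with $|E_F|>m$, the factor $\mathcal{D}/E$ is smooth on $\R^d$ and the symbol is compactly supported by Remark~\ref{Remark-Application-Function}(c), so the hypotheses of \cite[Thm.~4.22]{BM-Widom} are met. To compute the coefficient, I would evaluate the matrix-valued boundary values of the symbol from the two sides of the Fermi sphere. Using the orthogonal projections $P_\pm:=\tfrac{1}{2}(\mathbb{1}_{n_d}\pm\mathcal{D}/E)$ of rank $n_d/2$, the symbol jumps between $\mathbb{1}_{n_d}$ on one side and $P_\mp$ on the other; the two matrices commute, their simultaneous eigenvalues are only $0$'s and $1$'s, and on such a commuting pair the universal function of \cite[Thm.~4.22]{BM-Widom} reduces to a sum over paired eigenvalues of the corresponding scalar universal function. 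This produces the universal integral $\int_0^1 \tfrac{h(t)-h(1)t}{t(1-t)}\,\d t$ together with the surface integrals in \eqref{geom-factor}, the spinor multiplicity being compensated by the prefactor $1/n_d$. Independence of the coefficient from $b$ and $\Lambda'$ is automatic: only the jump across $\{|\xi|=p_F\}$ contributes at order $L^{d-1}\log L$, and neither $\varphi_{E_F}$ nor the region $\Lambda'\setminus\overline{\Lambda}$ affects the symbol there; the three-term combination in \eqref{Dirac-stat} is designed in turn to cancel the volume contributions of order $L^d$.

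Case (b) with $|E_F|\leq m\neq 0$ is easier: Remark~\ref{Remark-Application-Function}(d) allows replacing $a$ on $[-m,m]$ so that $a_\pm$ become smooth, and together with smoothness of $\mathcal{D}/E$ (valid since $m>0$) this makes the symbol smooth and compactly supported, for which the matrix-valued Widom--Sobolev formula gives no enhanced term and a remainder of order $O(L^{d-1})$. In case (c) with $E_F=m=0$ and $d=1$, the singularity of $\mathcal{D}/E$ at the origin has codimension $1$ and plays the role of the Fermi sphere in case (a) with $p_F=0$, so the same calculation applies. In $d\geq 2$ the singularity has codimension $\geq 2$ and is too low-dimensional to produce an enhanced term; I would isolate a small ball around the origin, control its contribution in Schatten norm via Lemma~\ref{Schatten-Estimate} to obtain an $O(L^{d-1})$ bound, and treat the smooth complement by the matrix-valued Widom--Sobolev formula. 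The main obstacle is the identification and explicit evaluation of the universal matrix function on the commuting pair $(\mathbb{1}_{n_d}, P_\mp)$ in case (a), which is where the genuinely matrix-valued structure of the Dirac operator interacts with the Widom--Sobolev coefficient; the degenerate subcase $E_F=m=0$ in $d\geq 2$ is the other delicate point, because the singularity shrinks to a point and the Widom--Sobolev formula does not apply directly.
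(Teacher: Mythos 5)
Your proposal follows essentially the same route as the paper: reduction of each term to $\mathbf{1}_{\Omega}\Op_{L}(\chi_{E_F}^{(b)}(\mathcal{D}))\mathbf{1}_{\Omega}$, the split \eqref{Dirac-Application-Function} into a scalar and a $\mathcal{D}/E$ part, application of the matrix-valued Widom--Sobolev formula with the coefficient evaluated by simultaneous diagonalisation of the commuting one-sided limits, and the separate Schatten-norm estimate (Lemma~\ref{Schatten-Estimate} combined with \cite[Lemma 4.19]{BM-Widom}) for the non-enhanced cases.

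One point in case (a) is stated incorrectly and needs repair: $\mathcal{D}/E$ is \emph{not} smooth on all of $\R^{d}$ when $m=0$ (it is discontinuous at the origin), so for $m=0$ and $|E_F|>0$ the symbol $A_2$ as you describe it does not satisfy the smoothness hypothesis of \cite[Thm.~4.22]{BM-Widom}. The fix, which the paper uses, is that $A_2$ only ever appears sandwiched between the projections $\Op_L(\mathbf{1}_{\Gamma^c})$ with $0\in\Gamma=B_{p_F}$, so one may replace the cut-off factor by a function in $C_c^\infty(\R^d)$ vanishing at the origin without changing the operator; the modified symbol $\tfrac12\varphi^{(b)}(\mathbb{1}_{n_d}-\mathcal{D}/E)$ is then genuinely smooth and compactly supported. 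With that amendment your argument matches the paper's proof.
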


\begin{rem}
	\label{Dirac-rem}
	\begin{enumerate}[(a)]		
	
\item We note that in $d=1$ the geometric factor $\Phi(\Lambda,E_F,m) =2|\Lambda|$ and subsequently the coefficient 
$W(h,\Lambda,E_F,m)$ are actually independent of the Fermi energy $E_F$ and the mass $m$. 
In $d\ge 2$ the geometric factor 
can also be computed explicitly \cite{LeschkeSobolevSpitzerLaplace} as
\begin{align}
\Phi(\Lambda,E_F,m)=\frac{2}{\Gamma\big(\frac{d+1}{2}\big)}\Big(\frac{p_{F}^2}{4\pi}\Big)^{\frac{d-1}{2}}|\partial\Lambda|,
\end{align}
where $\Gamma$ denotes Euler's gamma function.
\item For the R\'{e}nyi entropy functions $h_{\alpha}$, $\alpha\in\,]0,\infty[\,$, the coefficient $W(h,\Lambda,E_F,m)$ can also be computed explicitly as $W(h_{\alpha},\Lambda,E_F,m)=\frac{1+\alpha}{24\alpha}\Phi(\Lambda,E_F,m)$, see \cite{LeschkeSobolevSpitzerLaplace}. 
\item For positive mass $m>0$, $|E_F|>m$ and comparatively small Fermi momentum, i.e. $|E_F|-m\ll m$, the 
geometric factor $\Phi(\Lambda, E_{F},m)$ resembles the one in the corresponding asymptotic expansion for the Laplacian with non-relativistic Fermi momentum defined in terms of the kinetic energy and given by $p_{c}^{2}:=2m(|E_F|-m)$. More precisely, we have $\Phi(\Lambda,E_F,m)= J(\partial B_{p_{c}},\partial\Lambda)\big[1+O\big((|E_{F}|-m)/m\big)\big]$, where the function $J(\cdot,\cdot)$ is the corresponding geometric factor defined in \cite[Eq.\ (2)]{LeschkeSobolevSpitzerLaplace}.
\item 
		The statements of Theorem~\ref{Dirac} remain true in the case where the bigger volume 
		$\Lambda'$ is unbounded but has a bounded complement, see \cite[Remark 4.24]{BM-Widom}. In particular, one may set 
		$\Lambda'=\R^{d}$. This allows to relate the operators within the trace of 
		\eqref{Dirac-stat} to the operators studied in, e.g., 
		\cite{LeschkeSobSp.posT, LeschkeSobolevSpitzerMultiScale, Sobolev.2019.truncated, FinsterSob}, 
		where the volume term is subtracted directly. 
\item 		
	The factor $\frac{1}{n_{d}}$ on the left-hand side of \eqref{Dirac-stat} should be read as 	
	$\frac{1}{2} \frac{1}{n_{d}/2}$ with the first factor $\frac12$ accounting for the fact that the boundary of $\Lambda$ is counted twice by the trace. The second factor $\frac{1}{n_{d}/2}$ accounts for the number of spinor dimensions which contribute to the discontinuity in momentum space. As it turns out, each of these dimensions contributes the same amount to the enhanced area law. 
\end{enumerate}	
\end{rem}


\noindent
We now give a short overview over the strategy of the proof of Theorem \ref{Dirac}: We first evaluate 
\begin{align}
\tr_{L^2(\R^d)\otimes\C^{n_{d}}} 
			\Big[h\big(\mathbf{1}_{\Omega_{L}}\chi_{E_F}^{(b)}(D)
			 \mathbf{1}_{\Omega_{L}}\big) 
			\Big]
\end{align}
for a single bounded piece-wise $C^1$-admissible domain $\Omega$. By \eqref{Rewrite-PDO} and \eqref{Rewrite-L-PDO} this is equal to
\begin{align}\label{Goal-Single-Domain}
\tr_{L^2(\R^d)\otimes\C^{n_{d}}} 
			\Big[h\big(\mathbf{1}_{\Omega}\Op_{L}\big(\chi_{E_F}^{(b)}(\mathcal{D})\big)\mathbf{1}_{\Omega}\big)\Big].
\end{align}
In the cases where we establish an enhanced area law, we show that \eqref{Goal-Single-Domain} yields a volume term which is proportional to $|\Omega|$, an enhanced area term with coefficient $W(h,\Omega,E_F,m)$ and an error term of order $o(L^{d-1}\log L)$. The crucial part will be to study the symbol $\chi_{E_F}^{(b)}(\mathcal{D})$ and write it in such a way that the Widom--Sobolev formula for matrix-valued symbols \cite[Theorem 4.22, Remark 4.23]{BM-Widom} can be applied. 
In general, an application of \eqref{Dirac-Application-Function} with $a=\chi_{E_F}^{(b)}$ yields 
\begin{align}\label{Formula-Chi+,Chi-}
\chi_{E_F}^{(b)}\big(\mathcal{D}(\xi)\big)=\frac{1}{2}\Big(\big(\chi_{E_F}^{(b)}\big)_++\big(\chi_{E_F}^{(b)}\big)_-\Big)(\xi)\mathbb{1}_{n_d}+\frac{1}{2}\Big(\big(\chi_{E_F}^{(b)}\big)_+-\big(\chi_{E_F}^{(b)}\big)_-\Big)(\xi)\frac{\mathcal{D}(\xi)}{E(\xi)},
\end{align}
where $\big(\chi_{E_F}^{(b)}\big)_\pm$ are defined as in \eqref{a-pm-def}. The properties of these functions depend on the parameters $E_F$ and $m$ and will be investigated in the respective sections. After applying the Widom--Sobolev formula it remains to explicitly calculate the resulting coefficients. While doing so, we will frequently use the property that $h(0)=0$ for the function $h$ in Theorem \ref{Dirac} without explicitly stating it.

In the cases without an enhanced area law we estimate the trace norm 
\begin{align}
	\label{Goal-Single-Trace-Norm}
	\Big\|h\Big(\mathbf{1}_\Omega \Op_L\big(\chi_{E_F}^{(b)}(\mathcal{D})\big)\mathbf{1}_\Omega\Big)
	-\mathbf{1}_\Omega \Op_L\Big(h\big(\chi_{E_F}^{(b)}(\mathcal{D})\big)\Big)\mathbf{1}_\Omega\Big\|_1
\end{align}
and show that it only gives an area term. To do so, we first reduce the question to estimating a suitable Schatten-von Neumann norm by applying \cite[Lemma 4.19]{BM-Widom}. While  in the case $|E_F|\leq m\neq 0$ this estimate for the Schatten-von Neumann norm also follows from results in \cite{BM-Widom}, the case $E_F=m=0$ with $d\geq 2$ requires an additional estimate, as it turns out that the corresponding symbol is discontinuous at the origin. This estimate is provided in Lemma \ref{Schatten-Estimate}.

We start with the case $|E_F|>m$ in Section \ref{subsec:enhanced}, continue with the case $|E_F|\leq m\neq 0$ in Section \ref{subsec:area} and conclude with the case $E_F=m=0$ in Section \ref{subsec:point}. Afterwards we prove Theorem \ref{Dirac} in Section \ref{subsec:proof} by applying the results to $\Omega=\Lambda$, $\Omega=\Lambda'$ and $\Omega=\Lambda'\setminus\Lambda$ and by showing that the volume terms from \eqref{Goal-Single-Domain}, respectively the second terms in \eqref{Goal-Single-Trace-Norm}, add up to zero.

\subsection{The case \texorpdfstring{\boldmath$|E_F|>m$}{|E_F|>m}}
\label{subsec:enhanced}
It turns out that in this case the symbol $\chi_{E_F}^{(b)}(\mathcal{D})$ is discontinuous precisely at the boundary of the ball $B_{p_{F}}$.
We define $\Gamma:=\{\xi\in\R^d : \ E(\xi)<|E_F|\}=B_{p_{F}}$ as exactly this ball. This is clearly a bounded piece-wise $C^3$-domain and therefore satisfies the requirements of \cite[Theorem 4.22]{BM-Widom}. We now show the desired asymptotic formula for an arbitrary bounded piece-wise $C^1$-admissible domain $\Omega$. The next lemma treats the case $E_F<-m$. The case $E_F>m$ will be tackled in Lemma \ref{Lemma_E_F>m}.

\begin{lem}\label{Lemma_-E_F<-m}
Let $h$ be as in Theorem \ref{Dirac}, $\Omega$ be a bounded piece-wise $C^1$-admissible domain, $b\geq 0$, $E_F\in\R$ and $m\geq 0$ such that $E_F<-m$. Then, we have 
\begin{multline}
\frac{2}{n_{d}} \, \tr_{L^2(\R^d)\otimes\C^{n_{d}}}\Big[h\Big(\mathbf{1}_{\Omega}\Op_L\big(\chi_{E_F}^{(b)}(\mathcal{D})\big)\mathbf{1}_{\Omega}\Big)\Big] \\ = L^d \, V_{-}(h,b,E_{F},m) |\Omega |+L^{d-1}\log L \,  W(h,\Omega,E_F,m) +o(L^{d-1}\log L),
\end{multline}
as $L\rightarrow\infty$, where 
\begin{equation}
	V_{-}(h,b,E_{F},m) := \frac{1}{(2\pi)^d}\int_{\R^{d}} \big[h \circ \big(\chi_{E_F}^{(b)}\big)_- \big](\xi)\,\d\xi
\end{equation}
is independent of $n_{d}$ and $\Omega$.
\end{lem}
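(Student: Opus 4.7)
The plan is to apply the matrix-valued Widom--Sobolev formula \cite[Thm.~4.22, Rem.~4.23]{BM-Widom} after exploiting the sign hypothesis $E_F<-m$ to simplify the symbol $\chi_{E_F}^{(b)}(\mathcal{D})$. Since $E(\xi)=\sqrt{m^2+\xi^2}\ge m\ge 0>E_F$ for every $\xi\in\R^d$, the function $(\chi_{E_F}^{(b)})_+=\chi_{E_F}^{(b)}\circ E$ vanishes identically, whereas
\[
(\chi_{E_F}^{(b)})_-(\xi)=1_{\R^d\setminus\overline{B_{p_F}}}(\xi)\,\varphi_{E_F}\bigl(b-E(\xi)\bigr)
\]
is $C^\infty$ off the sphere $\partial B_{p_F}$, compactly supported via $\varphi_{E_F}$, and jumps from $0$ to $\varphi_{E_F}(b-|E_F|)=1$ across $\partial B_{p_F}$ (the latter value being forced by $b\ge 0$). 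Formula \eqref{Formula-Chi+,Chi-} therefore collapses to
\[
\chi_{E_F}^{(b)}(\mathcal{D})(\xi)=\tfrac{1}{2}(\chi_{E_F}^{(b)})_-(\xi)\bigl(\mathbb{1}_{n_d}-\mathcal{D}(\xi)/E(\xi)\bigr),
\]
a ``one-sided'' discontinuous matrix symbol that vanishes inside the Fermi ball $\Gamma:=B_{p_F}$ and so fits the framework of \cite[Thm.~4.22]{BM-Widom} with discontinuity surface $\partial\Gamma=\partial B_{p_F}$, which is $C^\infty$.

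Next I would read off the eigenvalue structure. Since $(\mathcal{D}/E)^2=\mathbb{1}_{n_d}$ and $\tr(\mathcal{D}/E)=0$, the matrix $\mathcal{D}(\xi)/E(\xi)$ has eigenvalues $\pm 1$ of multiplicity $n_d/2$, so at every $\xi$ the symbol $\chi_{E_F}^{(b)}(\mathcal{D})(\xi)$ has $n_d/2$ eigenvalues identically $0$ (``flat branch'') and $n_d/2$ eigenvalues equal to $(\chi_{E_F}^{(b)})_-(\xi)$ (``jump branch''). Applying the Widom--Sobolev formula then produces three terms. The $L^d$-volume contribution equals
\[
\frac{|\Omega|}{(2\pi)^d}\int_{\R^d}\tr h\bigl(\chi_{E_F}^{(b)}(\mathcal{D})(\xi)\bigr)\,\d\xi=\frac{n_d}{2}\,V_-(h,b,E_F,m)\,|\Omega|,
\]
using $h(0)=0$ and the eigenvalue count, so multiplication by $2/n_d$ yields the claimed volume coefficient. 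The $L^{d-1}\log L$-boundary term is a sum over eigenvalue-branch jumps. Only the jump branch contributes, because the flat branch $0\to 0$ is annihilated by $h(0)=0$; each of its $n_d/2$ copies contributes the universal scalar Widom coefficient $\frac{1}{(2\pi)^2}\int_0^1\frac{h(t)-h(1)t}{t(1-t)}\,\d t$ together with the geometric factor $\frac{1}{(2\pi)^{d-1}}\int_{\partial\Omega}\int_{\partial B_{p_F}}|n_{\partial\Omega}(x)\cdot n_{\partial B_{p_F}}(\xi)|\,\d S(\xi)\,\d S(x)=\Phi(\Omega,E_F,m)$. Dividing by $n_d/2$ reproduces $L^{d-1}\log L\cdot W(h,\Omega,E_F,m)$, with remainder $o(L^{d-1}\log L)$ supplied directly by \cite[Thm.~4.22]{BM-Widom}.

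The main obstacle will be lining up the eigenvalue-pair combinatorics of the matrix-valued Widom--Sobolev formula with the scalar jump $0\to 1$ identified above: one must verify that the $n_d/2$-fold degenerate flat branch genuinely decouples from the boundary asymptotics (no spurious cross-terms between branches), which hinges on the fact that $\mathcal{D}/E$ squares to $\mathbb{1}_{n_d}$ and thus is simultaneously diagonalisable with $\mathbb{1}_{n_d}$ off $\{0\}$. A minor additional point is the case $d=1$, where $\partial B_{p_F}=\{\pm p_F\}$ is discrete and the boundary surface integrals degenerate into a point count yielding $\Phi(\Omega,E_F,m)=2|\partial\Omega|$, consistent with the one-dimensional version of \cite[Thm.~4.22]{BM-Widom}; and a routine consistency check that, since $\varphi_{E_F}(b-|E_F|)=1$ for every $b\ge 0$, the leading enhanced-area coefficient $W(h,\Omega,E_F,m)$ is indeed independent of the cut-off $b$.
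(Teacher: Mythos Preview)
Your proposal is correct and follows the same route as the paper: use \eqref{Formula-Chi+,Chi-} together with $(\chi_{E_F}^{(b)})_+=0$ to reduce to a one-sided symbol supported on $\Gamma^c=\R^d\setminus B_{p_F}$, apply \cite[Thm.~4.22]{BM-Widom} with $A_1=0$, $A_2=\tfrac12\varphi^{(b)}(\mathbb{1}_{n_d}-\mathcal{D}/E)$, and read off the coefficients from the eigenvalue structure of $\mathcal{D}/E$. The only technical detail the paper makes more explicit is that, in the massless case, $\mathcal{D}/E$ is singular at the origin, so before invoking the theorem one replaces the scalar factor inside $\Gamma$ by a $C_c^\infty$ function vanishing at $0$ (harmless because of the projection $\Op_L(\mathbf{1}_{\Gamma^c})$), thereby ensuring the smooth-symbol hypothesis of \cite[Thm.~4.22]{BM-Widom} is genuinely met.
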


\begin{proof}
If $E_F<-m$, we have $\big(\chi_{E_F}^{(b)}\big)_+(\xi)=0$ and 
\begin{equation}
	\label{chi-minus}
	\big(\chi_{E_F}^{(b)}\big)_-(\xi)=1_{\Gamma^c}(\xi)\varphi\big(-E(\xi)+b\big)
	=:1_{\Gamma^c}(\xi)\tilde{\varphi}^{(b)}(\xi)
\end{equation} 
for the functions in \eqref{Formula-Chi+,Chi-}, which allows to rewrite  
\begin{align}
	\label{lemma3.3-start}
	h\Big(\mathbf{1}_{\Omega}\Op_L\big(\chi_{E_F}^{(b)}(\mathcal{D})\big)\mathbf{1}_{\Omega}\Big) 
	= h\Big(\mathbf{1}_{\Omega}\Op_L(\mathbf{1}_{\Gamma^c}) \Op_L\big(\tfrac{1}{2}\tilde{\varphi}^{(b)}
		(\mathbb{1}_{n_d}-\tfrac{\mathcal{D}}{E})\big)\Op_L(\mathbf{1}_{\Gamma^c})\mathbf{1}_{\Omega}\Big).
\end{align}
Recall that the matrix-valued symbol $\frac{\mathcal{D}}{E}$ is smooth, except at the origin in the case $m=0$. 
Due to the presence of the projection $\Op_L(\mathbf{1}_{\Gamma^c})$, we can modify the function $\tilde{\varphi}^{(b)}$ inside of $\Gamma$ without changing \eqref{lemma3.3-start}. In particular, we can replace $\tilde{\varphi}^{(b)}$ by a smooth and compactly supported function $\varphi^{(b)}\in C_c^\infty(\R^d)$ with $\varphi^{(b)}(0)=0$, as $\overline{\Gamma^c}\cap \{0\}=\emptyset$. Then, the symbol $\frac{1}{2}\varphi^{(b)}( \mathbb{1}_{n_d}-\frac{\mathcal{D}}{E})$ is smooth, bounded and compactly supported. As $\Omega$ is a bounded piece-wise $C^1$-admissible domain, $\Gamma$ is a bounded piece-wise $C^3$-admissible domain and $h$ satisfies Assumption \ref{H-Functions} with $h(0)=0$, all the requirements of \cite[Theorem 4.22]{BM-Widom} are fulfilled, and an application of this theorem with $A_1:=0, A_2:=\frac{1}{2}\varphi^{(b)}( \mathbb{1}_{n_d}-\frac{\mathcal{D}}{E})$ yields the asymptotic expansion
\begin{align}\label{Asymptotics-Last-Paper}
	\tr_{L^2(\R^d)\otimes\C^n} \Big[h\Big(\mathbf{1}_{\Omega}\Op_L(\mathbf{1}_{\Gamma^c})
		&\Op_L\big(\tfrac{1}{2}\varphi^{(b)}( \mathbb{1}_{n_d}-\tfrac{\mathcal{D}}{E})\big)
		\Op_L(\mathbf{1}_{\Gamma^c})\mathbf{1}_{\Omega}\Big)\Big] \nonumber \\ 
	& = L^d \, \mathfrak{W}_0\Big(\tr_{\C^n}\!\big[h\big(\tfrac{1}{2}\varphi^{(b)}
			( \mathbb{1}_{n_d}-\tfrac{\mathcal{D}}{E} ) \big)\big];\Omega,\Gamma^c\Big)\nonumber\\
	& \quad + L^{d-1}\log L \ \mathfrak{W}_1\Big(\mathfrak{U}\big(h;0,\tfrac{1}{2}\varphi^{(b)}
			( \mathbb{1}_{n_d}-\tfrac{\mathcal{D}}{E} )\big);\partial\Omega,\partial\Gamma\Big)\nonumber\\ 
	& \quad + o(L^{d-1}\log L),
\end{align}
as $L\rightarrow\infty$. Here, the coefficients are defined as follows for a bounded scalar-valued symbol $a$ and admissible domains $\Omega,\Xi$ with $1_{\Omega}$ and $1_{\Xi}a$ having compact support
\begin{align}
\label{coeff-w0-def}
	\mathfrak{W}_0(a;\Omega,\Xi):= \frac{ |\Omega|}{(2\pi)^d}\int_{\Xi} a(\xi) \,\d\xi \
\end{align}
and 
\begin{align}
	\label{coeff-w1-def}
	\mathfrak{W}_1(a;\partial\Omega,\partial\Xi):= 
	\LEFTRIGHT\{.{\begin{array}{l@{\quad}l} \displaystyle |\partial\Omega|\sum_{\xi\in\partial\Xi}a(\xi), 
			& \text{for~} d=1, \\[4ex]
		\displaystyle\frac{1}{(2\pi)^{d-1}}\int_{\partial\Omega}\int_{\partial\Xi} a(\xi)\,
			\vert n_{\partial \Omega}(x)\cdot n_{\partial \Xi}(\xi)\vert \,\d S(\xi)\, \d S(x), 
			& \text{for~} d\ge 2. \end{array}}
\end{align}
We note that if the symbol $a$ is equal to some constant $a_0$ on $\partial\Xi$ and $\partial\Xi = \partial B_{p_{F}}$, 
then 
\begin{equation}
	\label{frakW-Phi}
	\mathfrak{W}_1(a;\partial\Omega,\partial\Gamma) = a_0\Phi(\Omega,E_F,m),
\end{equation} 
where $\Phi(\Omega,E_F,m)$ is defined in Theorem \ref{Dirac}.
The symbol $\mathfrak{U}(g;A_1,A_2)$ appearing in the first argument of the coefficient $\mathfrak{W}_1$ in \eqref{Asymptotics-Last-Paper} is given by
\begin{equation}
	\label{frakU-def}
	\mathfrak{U}(g;A_1,A_2):=  \frac{1}{(2\pi)^2}\int_0^1 
	\frac{\tr_{\C^{n_{d}}}\! \big[ g\big(A_1t+A_2(1-t)\big)-g(A_1)t-g(A_2)(1-t) \big]}{t(1-t)}\;\d t
\end{equation}
and depends on a H\"older continuous function $g: \R \to\C$ and bounded matrix-valued symbols $A_{1}, A_{2}$. 

In order to compute the matrix traces in the coefficients, it will be convenient to diagonalise the symbol 
$\frac{1}{2}\varphi^{(b)}( \mathbb{1}_{n_d}-\tfrac{\mathcal{D}}{E})
	=\frac{1}{2}U^{-1}\varphi^{(b)}(\mathbb{1}_{n_d}-\beta) U$ argument-wise as in \eqref{Dirac-Application-Function}.
This gives for any $z\in\C$
\begin{align}
	\label{E_F<-m-Calc1}
	\tr_{\C^{n_{d}}}\! \Big[h\big(\tfrac{z}{2}\varphi^{(b)}(\mathbb{1}_{n_d}- \tfrac{\mathcal{D}}{E}) \big)\Big] 
	& = \tr_{\C^{n_{d}}}\! \Big[U^{-1} h\big(\tfrac{z}{2}\varphi^{(b)}(\mathbb{1}_{n_d}-\beta) \big)U\Big]
		\nonumber \\
	&	= \tr_{\C^{n_{d}}}\! \begin{pmatrix} 0 & 0 \\ 0 & h(z\varphi^{(b)})\mathbb{1}_{\frac{n_d}{2}} 
			  							 \end{pmatrix}  
		= \frac{n_d}{2} \, h(z\varphi^{(b)})
\end{align}
and implies for $z=1-t$ and $z=1$, respectively,
\begin{align}
	\label{Lemma3.3-frakU}
	\mathfrak{U}\big( & h;0,\tfrac{1}{2}\varphi^{(b)}(\mathbb{1}_{n_d}- \tfrac{\mathcal{D}}{E}) \big) \nonumber \\ 
	& = \frac{1}{(2\pi)^2}\int_0^1 \frac{\tr_{\C^{n_{d}}} \big[h\big(\frac{1}{2} \varphi^{(b)}
				(\mathbb{1}_{n_d}- \tfrac{\mathcal{D}}{E}) (1-t)\big) 
				- h\big(\frac{1}{2}\varphi^{(b)}(\mathbb{1}_{n_d}-\tfrac{\mathcal{D}}{E})\big)(1-t)\big]}{t(1-t)} \;\d t
		\nonumber \\ 
	& = \frac{n_{d}}{2}\frac{1}{(2\pi)^2}\int_0^1 \frac{h\big(\varphi^{(b)}(1-t)\big) - 
				h(\varphi^{(b)})(1-t)}{t(1-t)}
			\,\d t
		= \frac{n_d}{2}\;\mathfrak{A}(h;\varphi^{(b)}),
\end{align}
where we introduced
\begin{equation}
	\label{frakA-def}
	\mathfrak{A}(h;a):=\frac{1}{(2\pi)^2}\int_0^1 \frac{h(at)-h(a)t}{t(1-t)}\;\d t 
\end{equation}	
for a bounded scalar-valued symbol $a$.

We conclude from \eqref{E_F<-m-Calc1} with $z=1$ that 
\begin{multline}
\label{Lemma3.3-frakW0}
\mathfrak{W}_0\Big(\tr_{\C^{n_{d}}}\! \big[h\big(\tfrac{1}{2}\varphi^{(b)}( \mathbb{1}_{n_d}-\tfrac{\mathcal{D}}{E})\big)\big];\Omega,\Gamma^c\Big) \\
=\frac{n_d}{2} \, \mathfrak{W}_0\big(h(\varphi^{(b)});\Omega,\Gamma^c\big)  = \frac{n_d}{2}\frac{|\Omega|}{(2\pi)^d}\int_{\Gamma^c} \big(h(\varphi^{(b)})\big)(\xi)\;\d\xi
= \frac{n_d}{2}\, |\Omega| V_{-}(h,b,E_{F},m),
\end{multline}
where the last equality rests on \eqref{chi-minus}.

Now we return to \eqref{Lemma3.3-frakU}. Since $\varphi^{(b)}|_{\partial\Gamma}=1$ we see that 
$\mathfrak{A}(h;\varphi^{(b)})|_{\partial\Gamma} = \frac{1}{(2\pi)^2}\int_0^1 \frac{h(t)-h(1)t}{t(1-t)}\;\d t$
is also constant and, thus we infer from \eqref{frakW-Phi} that 
\begin{multline}\label{W_1_Enhanced}
\mathfrak{W}_1\Big(\mathfrak{U}\big(h;0,\tfrac{1}{2}\varphi^{(b)}( \mathbb{1}_{n_d}-\tfrac{\mathcal{D}}{E}) \big);\partial\Omega,\partial\Gamma\Big)=\frac{n_d}{2}\; \mathfrak{W}_1\big(\mathfrak{A}(h;\varphi^{(b)});\partial\Omega,\partial\Gamma\big) \\ =\frac{n_d}{2}\frac{1}{(2\pi)^2}\int_0^1\frac{h(t)-h(1)t}{t(1-t)}\; \d t  \ \Phi(\Omega,E_F,m)=\frac{n_d}{2}\,W(h,\Omega,E_F,m).
\end{multline}
Therefore, the claim follows with \eqref{Lemma3.3-frakW0} and \eqref{Asymptotics-Last-Paper}.
\end{proof}

\begin{lem}\label{Lemma_E_F>m}
Let $h$ be as in Theorem \ref{Dirac}, $\Omega$ be a bounded piece-wise $C^1$-admissible domain, $b\geq 0$, $E_F\in\R$ and $m\geq 0$ such that $E_F>m$. Then, we have 
\begin{multline}
	\frac{2}{n_{d}}\, \tr_{L^2(\R^d)\otimes\C^{n_{d}}}\Big[h\Big(\mathbf{1}_{\Omega}\Op_L\big(\chi_{E_F}^{(b)}(\mathcal{D})\big)\mathbf{1}_{\Omega}\Big)\Big] \\ = L^d V_{+}(h,b,E_{F},m)|\Omega| + L^{d-1}\log L \; W(h,\Omega,E_F,m) +o(L^{d-1}\log L),
\end{multline}
as $L\rightarrow\infty$, where  
\begin{equation}
	V_{+}(h,b,E_{F},m) := V_{-}(h,b,E_{F},m) + \frac{2|B_{p_{F}}|}{(2\pi)^{d}}\, h(1)
\end{equation}
is independent of $n_{d}$ and $\Omega$.
\end{lem}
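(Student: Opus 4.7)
The plan is to follow the scheme of Lemma \ref{Lemma_-E_F<-m}, with the novelty that the symbol $\chi_{E_F}^{(b)}(\mathcal{D})$ does not vanish on $\Gamma=B_{p_F}$ when $E_F>m$. Since $E_F>m>0$, the relation $-E(\xi)<0<E_F$ holds for every $\xi\in\R^d$, and $\varphi_{E_F}$ equals $1$ on $[0,\infty[\,$. Directly evaluating the defining formulas then gives $\bigl(\chi_{E_F}^{(b)}\bigr)_-(\xi)=\tilde{\varphi}^{(b)}(\xi)$ as in \eqref{chi-minus} and $\bigl(\chi_{E_F}^{(b)}\bigr)_+(\xi)=1_\Gamma(\xi)$, and $\tilde{\varphi}^{(b)}\equiv 1$ holds on a neighbourhood of $\overline{\Gamma}$. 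Substituting into \eqref{Formula-Chi+,Chi-} yields $\chi_{E_F}^{(b)}(\mathcal{D})(\xi)=\mathbb{1}_{n_d}$ for $\xi\in\Gamma$ and $\chi_{E_F}^{(b)}(\mathcal{D})(\xi)=\tfrac{1}{2}\tilde{\varphi}^{(b)}(\xi)\bigl(\mathbb{1}_{n_d}-\tfrac{\mathcal{D}(\xi)}{E(\xi)}\bigr)$ for $\xi\in\Gamma^c$; the symbol is again discontinuous precisely along $\partial B_{p_F}$.

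Next, exactly as in \eqref{lemma3.3-start}, I would smoothly extend the $\Gamma^c$-piece across $\Gamma$ by replacing $\tilde{\varphi}^{(b)}$ by some $\varphi^{(b)}\in C_c^\infty(\R^d)$ (with $\varphi^{(b)}(0)=0$ in the massless case) that agrees with $\tilde{\varphi}^{(b)}$ on $\overline{\Gamma^c}$, and apply the Widom--Sobolev formula \cite[Theorem 4.22]{BM-Widom} now with the non-zero choice $A_1:=\mathbb{1}_{n_d}$ on $\Gamma$ and the same $A_2:=\tfrac{1}{2}\varphi^{(b)}\bigl(\mathbb{1}_{n_d}-\tfrac{\mathcal{D}}{E}\bigr)$ as in Lemma \ref{Lemma_-E_F<-m}. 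Both $A_1$ and $A_2$ are smooth, and the relevant compact-support conditions on $\mathbf{1}_\Gamma A_1$ and $A_2$ are met, so all hypotheses are satisfied.

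The identification of the coefficients then constitutes the third step. For the volume term, the $\Gamma$-piece now contributes $\mathfrak{W}_0\bigl(\tr_{\C^{n_d}}[h(\mathbb{1}_{n_d})];\Omega,\Gamma\bigr)=n_d h(1)|B_{p_F}||\Omega|/(2\pi)^d$, which will account for the additional $\tfrac{2|B_{p_F}|}{(2\pi)^d}h(1)$ in $V_+$, whereas the $\Gamma^c$-piece yields the same $\tfrac{n_d}{2}|\Omega|V_-(h,b,E_F,m)$ already computed in \eqref{Lemma3.3-frakW0}. For the log coefficient, I would compute $\mathfrak{U}\bigl(h;\mathbb{1}_{n_d},\tfrac{1}{2}(\mathbb{1}_{n_d}-\tfrac{\mathcal{D}}{E})\bigr)$ on $\partial\Gamma$, where $\varphi^{(b)}=1$. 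Diagonalising $\tfrac{\mathcal{D}}{E}=U^{-1}\beta U$ as in \eqref{E_F<-m-Calc1}, the convex combination $tA_1+(1-t)A_2$ becomes $U^{-1}\bigl[\tfrac{1+t}{2}\mathbb{1}_{n_d}-\tfrac{1-t}{2}\beta\bigr]U$, whose eigenvalues are $t$ and $1$, each of multiplicity $n_d/2$. Evaluating the three spinor traces in the integrand of $\mathfrak{U}$ using $h(0)=0$ and collecting terms, the $h(1)$-contributions cancel and the integrand reduces to $\tfrac{n_d/2}{t(1-t)}\bigl[h(t)-h(1)t\bigr]$. Consequently $\mathfrak{U}=\tfrac{n_d}{2}\cdot\tfrac{1}{(2\pi)^2}\int_0^1\tfrac{h(t)-h(1)t}{t(1-t)}\,\d t$ is constant on $\partial\Gamma$, and by \eqref{frakW-Phi} one concludes $\mathfrak{W}_1=\tfrac{n_d}{2}W(h,\Omega,E_F,m)$, which is the desired coefficient of $L^{d-1}\log L$.

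The main obstacle will be the explicit evaluation of $\mathfrak{U}$ on $\partial\Gamma$ with two genuinely non-zero matrix-valued arguments: the fact that the log coefficient is identical to the one obtained in Lemma \ref{Lemma_-E_F<-m} -- despite $h(A_1)$ and $h(A_2)$ each contributing non-trivial $h(1)$-terms to the spinor trace -- rests on a delicate cancellation that becomes visible only after $\beta$-diagonalisation, and it is precisely this structural feature of the Dirac symbol that makes the enhanced area law universal across the regimes $E_F<-m$ and $E_F>m$.
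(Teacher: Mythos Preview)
Your proposal is correct and follows essentially the same approach as the paper's proof: both identify the symbol as $\mathbb{1}_{n_d}$ on $\Gamma$ and $\tfrac12\tilde\varphi^{(b)}(\mathbb{1}_{n_d}-\mathcal{D}/E)$ on $\Gamma^c$, apply \cite[Theorem~4.22]{BM-Widom} with $A_1=\mathbb{1}_{n_d}$ and $A_2=\tfrac12\varphi^{(b)}(\mathbb{1}_{n_d}-\mathcal{D}/E)$, and compute the coefficients via the $\beta$-diagonalisation, obtaining the same cancellation in $\mathfrak{U}$ that yields $\tfrac{n_d}{2}\mathfrak{A}(h;1)$ on $\partial\Gamma$. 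The only slip is writing ``$E_F>m>0$'' where the lemma allows $m\ge0$; since you explicitly treat the massless case when modifying $\varphi^{(b)}$, this is purely notational and does not affect the argument.
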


\begin{proof}
In the case $E_F>m$ we obtain $\big(\chi_{E_F}^{(b)}\big)_+(\xi)=1_{\Gamma}(\xi)$ and $\big(\chi_{E_F}^{(b)}\big)_-(\xi)=\varphi(-E(\xi)+b)=\tilde{\varphi}^{(b)}(\xi)$ for the functions from \eqref{Formula-Chi+,Chi-}. Further we have $\big(\chi_{E_F}^{(b)}\big)_-|_\Gamma=1$ by the definition of $\tilde{\varphi}^{(b)}$. Therefore, we can write $\big(\chi_{E_F}^{(b)}\big)_- (\xi)=1_{\Gamma}(\xi) + 1_{\Gamma^c}(\xi) \tilde{\varphi}^{(b)}(\xi)$. Applying \eqref{Formula-Chi+,Chi-}, 
the operator in question reads
\begin{multline}
\label{lemma3.4-start}
h\big(\mathbf{1}_{\Omega}\Op_L(\chi_{E_F}^{(b)}(\mathcal{D}))\mathbf{1}_{\Omega}\big) \\ = h\Big(\mathbf{1}_{\Omega}\big\{\Op_L(\mathbf{1}_\Gamma)+\Op_L(\mathbf{1}_{\Gamma^c})\Op_L\big(\tfrac{1}{2}\tilde{\varphi}^{(b)}( \mathbb{1}_{n_d}-\tfrac{\mathcal{D}}{E})\big)\Op_L(\mathbf{1}_{\Gamma^c})\big\}\mathbf{1}_{\Omega}\Big).
\end{multline}
As in the proof of Lemma \ref{Lemma_-E_F<-m} after \eqref{lemma3.3-start}, we can replace the function $\tilde{\varphi}^{(b)}$ in \eqref{lemma3.4-start} by a function $\varphi^{(b)}\in C_c^\infty(\R^d)$ with $\varphi^{(b)}(0)=0$ without changing the result. 
This guarantees that the symbols $A_1:=\mathbb{1}_{n_d}$ and $A_2:=\frac{1}{2}\varphi^{(b)}( \mathbb{1}_{n_d}-\frac{\mathcal{D}}{E})$ are smooth, bounded and the latter also compactly supported. Thus, we can apply \cite[Theorem 4.22]{BM-Widom}  with the symbols $A_1$ and $A_2$ as above.
This yields the asymptotic expansion
\begin{align}
	\label{lemma3.4-next}
	\tr_{L^2(\R^d)\otimes\C^{n_{d}}} & \Big[h\Big(\mathbf{1}_{\Omega} \big\{\Op_L(\mathbf{1}_\Gamma)
		+ \Op_L(\mathbf{1}_{\Gamma^c})\Op_L\big(\tfrac{1}{2}\varphi^{(b)}( \mathbb{1}_{n_d}-\tfrac{\mathcal{D}}{E})\big)\Op_L(\mathbf{1}_{\Gamma^c})\big\}\mathbf{1}_{\Omega}\Big)\Big] 
		\nonumber \\ 
	& = L^d \,\mathfrak{W}_0\Big(\tr_{\C^{n_{d}}}\!\big[h(\mathbb{1}_{n_d})\big];\Omega,\Gamma\Big)
		\nonumber\\
	& \quad + L^d \,\mathfrak{W}_0\Big(\tr_{\C^{n_{d}}}\! \big[h\big(\tfrac{1}{2}\varphi^{(b)}
			( \mathbb{1}_{n_d}-\tfrac{\mathcal{D}}{E})\big)\big];\Omega,\Gamma^c\Big)\nonumber\\
	& \quad + L^{d-1}\log L \; \mathfrak{W}_1\Big(\mathfrak{U}\big(h;\mathbb{1}_{n_d},\tfrac{1}{2}\varphi^{(b)}( \mathbb{1}_{n_d}-\tfrac{\mathcal{D}}{E})\big);\partial\Omega,\partial\Gamma\Big)\nonumber\\ 
	& \quad +o(L^{d-1}\log L),
\end{align}
as $L\rightarrow\infty$.

We observe 
\begin{align}
\label{trace-one}
\tr_{\C^{n_{d}}}[h(\mathbb{1}_{n_d})]=n_{d}h(1)
\end{align}
and claim 
\begin{multline}
	\label{frakW0-3.4}
	\mathfrak{W}_0\Big(\tr_{\C^{n_{d}}} \! \big[h(\mathbb{1}_{n_d})\big];\Omega,\Gamma\Big)
		+ \mathfrak{W}_0  \Big(\tr_{\C^{n_{d}}} \! \big[ h\big(\tfrac{1}{2}\varphi^{(b)}
			(\mathbb{1}_{n_d}-\tfrac{\mathcal{D}}{E})\big)\big];\Omega,\Gamma^c\Big)  \\ 
	 = \frac{n_{d}}{2} \, |\Omega| \,\bigg[ \frac{2|\Gamma|}{(2\pi)^d} \; h(1)
				+ V_{-}(h,b,E_{F},m) \bigg].
\end{multline}
Here, the first coefficient was evaluated with the definition \eqref{coeff-w0-def} and the second one with the identity \eqref{Lemma3.3-frakW0}.

In order to treat the coefficient of the enhanced area term we observe that by argument-wise diagonalisation 
$\tfrac{1}{2}\varphi^{(b)}( \mathbb{1}_{n_d}-\tfrac{\mathcal{D}}{E}) =\tfrac{1}{2}U^{-1}\varphi^{(b)}(\mathbb{1}_{n_d}-\beta) U$ as in \eqref{Dirac-Application-Function} we get for 
$\xi\in\partial\Gamma$, for which  $\varphi^{(b)}(\xi)=1$, 
\begin{align}
	\tr_{\C^{n_{d}}} \! \Big[h \Big(\mathbb{1}_{n_d}t +  & \tfrac{1}{2}\varphi^{(b)}(\xi) \big( \mathbb{1}_{n_d}-\tfrac{\mathcal{D}}{E}(\xi)\big) (1-t)\Big) \Big] \notag\\
	& = \tr_{\C^{n_{d}}} \! \Big[ U^{-1}(\xi) h\Big(\mathbb{1}_{n_d}t + 
			\tfrac{1}{2}(\mathbb{1}_{n_d}-\beta)(1-t)\Big)     U(\xi) \Big] \notag\\
	& = \tr_{\C^{n_{d}}}\! \begin{pmatrix}  h(t) \mathbb{1}_{\frac{n_d}{2}}  & 0 \\ 0 & h(1)\mathbb{1}_{\frac{n_d}{2}} 
			  							 \end{pmatrix}  	
		= \frac{n_{d}}{2}\, [h(t) + h(1)]	.							 	
\end{align} 
Using this identity, \eqref{trace-one} and \eqref{E_F<-m-Calc1} with $z=1$ and $\varphi^{(b)}=1$, we obtain from the definition \eqref{frakU-def} 
\begin{align}\label{U-E_F>m}
 \mathfrak{U}\big(h;\mathbb{1}_{n_d} &,\tfrac{1}{2}\varphi^{(b)}( \mathbb{1}_{n_d}-\tfrac{\mathcal{D}}{E})\big) (\xi)  \nonumber \\ 
	& = \frac{1}{(2\pi)^2}\int_0^1 \frac{1}{t(1-t)} \, \tr_{\C^{n_{d}}}\bigg[  h\Big(\mathbb{1}_{n_d}t + 
			\tfrac{1}{2}\varphi^{(b)}(\xi)\big(\mathbb{1}_{n_d}- \tfrac{\mathcal{D}}{E}(\xi) \big)(1-t)\Big) 		\nonumber \\ 
	& \hspace{4.8cm}  -   h(\mathbb{1}_{n_d})t 	- h\Big( \tfrac{1}{2}\varphi^{(b)}(\xi)\big(\mathbb{1}_{n_d} 
			- \tfrac{\mathcal{D}}{E}(\xi) \big)\Big) (1-t) \bigg] \, \d t 
		\nonumber \\ 
	& = \frac{n_{d}}{2} \, \frac{1}{(2\pi)^2} \int_0^1 \frac{h(t)+h(1)
				- 2h(1)t - h(1)(1-t)}{t(1-t)} \, \d t \nonumber \\
	& =  \frac{n_{d}}{2}\; \mathfrak{A}(h;1), 
\end{align}
where $\mathfrak{A}$ was introduced in \eqref{frakA-def}. Accordingly, the symbol \eqref{U-E_F>m} is constant on 
$\partial\Gamma$, and the relation \eqref{frakW-Phi} yields 
\begin{align}
	\label{frakW1-3.4}
	\mathfrak{W}_1\Big(\mathfrak{U}\big(h;\mathbb{1}_{n_d},\tfrac{1}{2}\varphi^{(b)}
		( \mathbb{1}_{n_d}-\tfrac{\mathcal{D}}{E})\big);\partial\Omega,\partial\Gamma\Big)
	& = \frac{n_{d}}{2}\frac{1}{(2\pi)^2}\int_0^1\frac{h(t)-h(1)}{t(1-t)}\,\d t  \;
			\Phi(\Omega,E_F,m) \nonumber\\ 
	&	= \frac{n_{d}}{2}\, W(h,\Omega,E_F,m).
\end{align}
Therefore, the claim follows from \eqref{frakW1-3.4}, \eqref{frakW0-3.4}, \eqref{lemma3.4-next} and 
\eqref{lemma3.4-start}.
\end{proof}


\subsection{The case \texorpdfstring{\boldmath$|E_F|\leq m\neq 0$}{|E_F| <= m  positive}}
\label{subsec:area}

The key observation in this case is that the discontinuity of the function $\chi_{E_F}^{(b)}$ does 
not affect the symbol $\chi_{E_F}^{(b)}\big(\mathcal{D}\big)$. This is because there exists $\tilde\chi_{E_F}^{(b)} \in C_{c}^{\infty}(\R)$ with $\supp\big(\chi_{E_F}^{(b)}  - \tilde\chi_{E_F}^{(b)} \big) \subseteq [-m,m]$. Therefore,
Remark~\ref{Remark-Application-Function}(d) allows to replace $\chi_{E_F}^{(b)}$ by $\tilde\chi_{E_F}^{(b)}$ without changing the symbol, $\tilde\chi_{E_F}^{(b)}(\mathcal{D}) = \chi_{E_F}^{(b)}(\mathcal{D})$ Lebesgue-a.e., and  Remark~\ref{Remark-Application-Function}(e) guarantees smoothness of the symbol $\tilde\chi_{E_F}^{(b)}(\mathcal{D})$. We now show that this is sufficient to obtain the desired estimate for \eqref{Goal-Single-Trace-Norm} from \cite[Lemmas 3.3 and 4.19]{BM-Widom}.

\begin{lem}\label{Lemma_E_F=<m>0}
Let $h$ be as in Theorem \ref{Dirac}, $\Omega$ be a bounded admissible domain, $b\geq 0$, $E_F\in\R$ and $m>0$ such that $|E_F|\leq m$. Then for every $L\geq 1$, we have
\begin{align}
\Big\| h\Big(\mathbf{1}_\Omega \Op_L\big(\chi_{E_F}^{(b)}(\mathcal{D})\big)\mathbf{1}_\Omega\Big)-\mathbf{1}_\Omega \Op_L\Big(h\big(\chi_{E_F}^{(b)}(\mathcal{D})\big)\Big)\mathbf{1}_\Omega\Big\|_1 
\leq CL^{d-1},
\end{align}
where the constant $C>0$ is independent of $L$.
\end{lem}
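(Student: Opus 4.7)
The plan is to exploit the observation highlighted in the paragraph preceding the lemma: in this regime the (only) discontinuity of the scalar function $\chi_{E_F}^{(b)}$ lies inside the spectral gap $[-m,m]$ of $D$, so the matrix-valued symbol $\chi_{E_F}^{(b)}(\mathcal{D})$ is smooth up to equality almost everywhere. Concretely, I would first pick $\tilde\chi_{E_F}^{(b)}\in C_c^\infty(\R)$ with $\supp\big(\chi_{E_F}^{(b)} - \tilde\chi_{E_F}^{(b)}\big)\subseteq[-m,m]$. Remark~\ref{Remark-Application-Function}(d) then gives the operator identity $\Op_L\big(\chi_{E_F}^{(b)}(\mathcal{D})\big) = \Op_L\big(\tilde\chi_{E_F}^{(b)}(\mathcal{D})\big)$, and Remark~\ref{Remark-Application-Function}(e) guarantees that $\tilde\chi_{E_F}^{(b)}(\mathcal{D})$ is smooth, bounded and compactly supported on $\R^d$ (the assumption $m>0$ rules out the massless-origin subtlety). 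The same replacement is valid inside the second operator via $h\big(\chi_{E_F}^{(b)}(\mathcal{D})\big)=h\big(\tilde\chi_{E_F}^{(b)}(\mathcal{D})\big)$ almost everywhere.

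Next, I would invoke \cite[Lemma~4.19]{BM-Widom}, which reduces a trace-norm difference of the shape $\|h(\mathbf{1}_\Omega B \mathbf{1}_\Omega) - \mathbf{1}_\Omega h(B)\mathbf{1}_\Omega\|_1$ for a H\"older continuous $h$ (with $h(0)=0$) and a bounded self-adjoint $B$ to a Schatten--von Neumann quasi-norm estimate of commutators of the form $[\mathbf{1}_\Omega, B]$ (or an operator built from them), with the Schatten exponent determined by the H\"older exponent $\gamma$ of Assumption~\ref{H-Functions}. With $B := \Op_L\big(\tilde\chi_{E_F}^{(b)}(\mathcal{D})\big)$, the required Schatten-norm bound is then the content of \cite[Lemma~3.3]{BM-Widom}, whose hypotheses are precisely those just verified for $\tilde\chi_{E_F}^{(b)}(\mathcal{D})$ together with the admissibility of $\Omega$. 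Combining these two inputs yields a bound of order $L^{d-1}$ and hence the claim.

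The main obstacle is bookkeeping rather than analysis: one has to verify that the H\"older exponent $\gamma$, the Schatten exponent, and the smoothness order of the symbol all fit together so that the combined application of Lemma~4.19 and Lemma~3.3 of \cite{BM-Widom} indeed produces an area term of order $L^{d-1}$ with no logarithmic factor. The matrix structure plays no role here beyond ensuring that $\tilde\chi_{E_F}^{(b)}(\mathcal{D})$ is a legitimate matrix-valued symbol of the class admitted in \cite{BM-Widom}. Conceptually, this is the simplest of the three sub-cases in the proof of Theorem~\ref{Dirac}: the discontinuity of $\chi_{E_F}^{(b)}$ is absorbed entirely by the spectral gap of the Dirac operator, so there is no Fermi-surface singularity to produce a logarithmically enhanced contribution, consistent with the $O(L^{d-1})$ statement of part~(b) of the theorem.
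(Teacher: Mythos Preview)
Your proposal is correct and follows essentially the same route as the paper's proof: replace $\chi_{E_F}^{(b)}$ by a smooth compactly supported $\tilde\chi_{E_F}^{(b)}$ via Remark~\ref{Remark-Application-Function}(d)--(e), apply \cite[Lemma~4.19]{BM-Widom} with exponent $q\in\,]0,\gamma[\,$ to reduce to the off-diagonal Schatten quasi-norm $\big\|\mathbf{1}_\Omega \Op_L\big(\tilde\chi_{E_F}^{(b)}(\mathcal{D})\big)\mathbf{1}_{\Omega^c}\big\|_q^q$, and then invoke \cite[Lemma~3.3]{BM-Widom}. The only detail you leave implicit is that the paper also multiplies by a spatial cutoff $\phi\in C_c^\infty(\R^d)$ with $\phi|_\Omega=1$ so that the full symbol $\phi\,\tilde\chi_{E_F}^{(b)}(\mathcal{D})$ lies in $C_c^\infty(\R^d\times\R^d)$, as required by \cite[Lemma~3.3]{BM-Widom}; this is indeed just bookkeeping.
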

\begin{proof}
As $\chi_{E_F}^{(b)}$ and thus the operator $\Op_L\big(\chi_{E_F}^{(b)}(\mathcal{D})\big)$ are bounded, we can assume that the function $h$ is compactly supported. An application of \cite[Lemma 4.19]{BM-Widom} with $A=\Op_L\big(\chi_{E_F}^{(b)}(\mathcal{D})\big)$, $P=\mathbf{1}_\Omega$ and arbitrary 
$q\in \,]0,\gamma[\,$ yields
\begin{align}
\Big\| h\Big(\mathbf{1}_\Omega \Op_L\big(\chi_{E_F}^{(b)}(\mathcal{D})\big)\mathbf{1}_\Omega\Big)-\mathbf{1}_\Omega \Op_L\Big(h\big(\chi_{E_F}^{(b)}(\mathcal{D})\big)\Big)\mathbf{1}_\Omega\Big\|_1
\leq C_1 \big\|\mathbf{1}_\Omega \Op_L \big(\chi_{E_F}^{(b)}(\mathcal{D})\big)\mathbf{1}_{\Omega^c} \big\|_q^q,
\end{align}
where the constant $C_1>0$ does not depend on the parameter $L$. Therefore, it suffices to show
\begin{align}
\big\|\mathbf{1}_\Omega \Op_L \big(\chi_{E_F}^{(b)}(\mathcal{D})\big)\mathbf{1}_{\Omega^c} \big\|_q^q \leq C_2L^{d-1}
\end{align}
for some constant $C_2>0$ which does not depend on $L$.
As discussed at the beginning of this section, we can replace the discontinuous symbol $\chi_{E_F}^{(b)}(\mathcal{D})$ by $\tilde\chi_{E_F}^{(b)}(\mathcal{D}) \in C_{c}^{\infty}(\R^{d})$. As $\Omega$ is bounded, there exists a function $\phi\in C_c^\infty(\R^d)$ with $\phi|_{\Omega}=1$, and we have $\phi \tilde\chi_{E_F}^{(b)}(\mathcal{D}) \in C_{c}^{\infty}(\R^{d} \times \R^{d})$. As $\Omega$ is also admissible, we can apply \cite[Lemma~3.3]{BM-Widom} with 
$A=\phi \tilde\chi_{E_F}^{(b)}(\mathcal{D})$ and $\Lambda=\Omega$ to get the desired result.
\end{proof}


\subsection{The case \texorpdfstring{\boldmath$E_F=m=0$}{E_F=m=0}}
\label{subsec:point}

If $|E_F|=m=0$, we obtain  the functions $\big(\chi_0^{(b)}\big)_+=0$ and $\big(\chi_0^{(b)}\big)_-=1_{\{y\in\R\,:\, y<0\}}(-E)\varphi(-E+b)=1_{\R^{d}\setminus\{0\}}\varphi(-E+b)$. As in the proof of Lemma \ref{Lemma_-E_F<-m} we define the function $\tilde{\varphi}^{(b)}$ by $\tilde{\varphi}^{(b)}(\xi):=\varphi(-E(\xi)+b)$ for every $\xi\in\R^d$. It only differs from $\big(\chi_0^{(b)}\big)_-$ at the origin $\xi=0$, i.e.\ on a set of Lebesgue measure zero. Therefore, after an application of \eqref{Formula-Chi+,Chi-}, the relevant operator reads
\begin{align}\label{Operator-m=0}
\mathbf{1}_\Omega \Op_L\big(\chi_0^{(b)}(\mathcal{D})\big)\mathbf{1}_\Omega=\mathbf{1}_\Omega \Op_L\Big(\tfrac{1}{2} \tilde{\varphi}^{(b)}\big(\mathbb{1}_{n_d}-\tfrac{\mathcal{D}}{E}\big)\Big)\mathbf{1}_\Omega.
\end{align}
We observe that the symbol $\frac{1}{2} \tilde{\varphi}^{(b)}\big(\mathbb{1}_{n_d}-\frac{\mathcal{D}}{E}\big) \in C_{c}^{\infty}(\R^{d}\setminus \{0\})$ has a discontinuity at the origin. It will turn out that this zero-dimensional discontinuity leads to the same behaviour as in Section \ref{subsec:enhanced} in dimension $1$. In higher dimensions, it will lead to the same behaviour as in Section \ref{subsec:area}.

We will first consider the case $d=1$, where, accordingly $n_1=2$.
\begin{lem}\label{Lemma_E_F=m=0_d=1}
Let $h$ be as in Theorem \ref{Dirac}, $\Omega$ be a bounded piece-wise $C^1$-admissible domain, $b\geq 0$, $E_F=m=0$ and $d=1$. Then, we have
\begin{multline}
\frac{2}{n_{1}}\, \tr_{L^2(\R)\otimes\C^2}\Big[h\Big(\mathbf{1}_{\Omega}\Op_L\big(\chi_{0}^{(b)}(\mathcal{D})\big)\mathbf{1}_{\Omega}\Big)\Big] 
= L V_{0}(h,b) |\Omega | + \log L \, W(h,\Omega,0,0) +o(\log L),
\end{multline}
as $L\rightarrow\infty$, where 
\begin{equation}
	V_{0}(h,b) := \frac{1}{2\pi} \int_{\R} \big[h \comp \big(\chi_{0}^{(b)}\big)_{-}\big](\xi)\,\d\xi 
\end{equation}
is independent of $\Omega$.
\end{lem}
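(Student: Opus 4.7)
The plan is to mirror the proof of Lemma~\ref{Lemma_E_F>m}, with the degenerate Fermi ball replaced by the admissible half-line $\Gamma:=\,]0,\infty[\,$, whose single boundary point $\{0\}$ carries the jump of the symbol. In $d=1$ with $m=0$ one has $E(\xi)=|\xi|$ and $\mathcal{D}(\xi)=\alpha_1\xi$, so that $\mathcal{D}(\xi)/E(\xi)=\sgn(\xi)\alpha_1$ for every $\xi\neq 0$. Introducing the smooth matrix-valued symbols
\begin{align}
A_1(\xi):=\tfrac{1}{2}\tilde{\varphi}^{(b)}(\xi)(\mathbb{1}_{n_1}-\alpha_1),\qquad A_2(\xi):=\tfrac{1}{2}\tilde{\varphi}^{(b)}(\xi)(\mathbb{1}_{n_1}+\alpha_1),
\end{align}
one obtains $\tfrac{1}{2}\tilde{\varphi}^{(b)}(\mathbb{1}_{n_1}-\mathcal{D}/E)=\mathbf{1}_\Gamma A_1+\mathbf{1}_{\Gamma^c}A_2$ Lebesgue-almost everywhere. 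Since pseudo-differential operators with purely momentum-dependent symbols compose by symbol multiplication, the operator in \eqref{Operator-m=0} equals
\begin{align}
\mathbf{1}_\Omega\Big\{\Op_L(\mathbf{1}_\Gamma)\Op_L(A_1)\Op_L(\mathbf{1}_\Gamma)+\Op_L(\mathbf{1}_{\Gamma^c})\Op_L(A_2)\Op_L(\mathbf{1}_{\Gamma^c})\Big\}\mathbf{1}_\Omega,
\end{align}
which is precisely the form addressed by \cite[Thm.~4.22]{BM-Widom}.

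Before applying that theorem I would check that both $A_1$ and $A_2$ are smooth, bounded and compactly supported. Boundedness and compact support are immediate; only the smoothness of $\tilde{\varphi}^{(b)}(\xi)=\varphi(-|\xi|+b)$ at $\xi=0$ is non-trivial. It follows from $\varphi\equiv 1$ on $[0,\infty[\,$, which forces all derivatives of $\varphi$ at the endpoint $0$ to vanish and thereby renders $\varphi(-|\xi|+b)$ a $C^\infty$ function of $\xi$ across the origin. The Widom--Sobolev formula then produces a volume contribution of order $L$, an enhanced area contribution of order $\log L$, and an error of order $o(\log L)$; it remains to identify the two leading coefficients.

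The key observation for the coefficient computations is that $\alpha_1$ has eigenvalues $\pm 1$, so each $\tfrac{1}{2}(\mathbb{1}_{n_1}\mp\alpha_1)$ is a rank-one orthogonal projection. Consequently $A_i(\xi)$ has spectrum $\{0,\tilde{\varphi}^{(b)}(\xi)\}$, and $h(0)=0$ gives $\tr_{\C^2}[h(A_i(\xi))]=h(\tilde{\varphi}^{(b)}(\xi))$. Summing the two $\mathfrak{W}_0$-contributions and using the a.e.\ equality $\tilde{\varphi}^{(b)}=(\chi_0^{(b)})_-$ produces $L\,|\Omega|\,V_0(h,b)$. For the enhanced area coefficient one evaluates $\mathfrak{U}(h;A_1,A_2)$ at the unique boundary point $\xi=0$, where $\tilde{\varphi}^{(b)}(0)=\varphi(b)=1$: the convex combination $A_1(0)t+A_2(0)(1-t)=\tfrac{1}{2}\mathbb{1}_{n_1}+\tfrac{1}{2}(1-2t)\alpha_1$ has eigenvalues $\{t,1-t\}$, each $A_i(0)$ is a rank-one projection with $\tr_{\C^2}[h(A_i(0))]=h(1)$, and a symmetrisation $t\mapsto 1-t$ collapses the $\mathfrak{U}$-integrand to $\tfrac{2}{(2\pi)^2}\int_0^1\frac{h(t)-h(1)t}{t(1-t)}\,\d t$. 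Combined with $\Phi(\Omega,0,0)=2|\partial\Omega|$ in $d=1$, this identifies the $\log L$-coefficient as $W(h,\Omega,0,0)$, while the prefactor $\tfrac{2}{n_1}=1$ closes the bookkeeping. The main subtlety, absent from Lemmas~\ref{Lemma_-E_F<-m} and~\ref{Lemma_E_F>m}, is the genuinely two-sided decomposition with both $A_1$ and $A_2$ non-trivial, which necessitates the $t\leftrightarrow 1-t$ symmetrisation in the area coefficient.
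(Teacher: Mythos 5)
Your proposal is correct and follows essentially the same route as the paper: decompose $\tfrac12\tilde\varphi^{(b)}(\mathbb{1}_2-\mathcal{D}/E)$ across the two half-lines into the smooth, compactly supported symbols $\tfrac12\tilde\varphi^{(b)}(\mathbb{1}_2\mp\alpha_1)$, apply the matrix-valued Widom--Sobolev formula, and evaluate the coefficients via the rank-one projections $\tfrac12(\mathbb{1}_2\mp\alpha_1)$, with the $t\leftrightarrow 1-t$ symmetrisation producing the factor $2$ that is absorbed into $\Phi(\Omega,0,0)=2|\partial\Omega|$. The only differences are cosmetic (you take $\Gamma=\,]0,\infty[\,$ instead of $]-\infty,0[\,$, and you spell out the smoothness of $\tilde\varphi^{(b)}$ at the origin, which the paper merely asserts).
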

\begin{proof}

If $d=1$ and $m=0$, there is only one relevant Dirac matrix. We choose 
\begin{align}
\mathcal{D}(\xi)=\sum_{k=1}^1 \alpha_k \xi_k +m\beta = \alpha_1 \xi=\begin{pmatrix}
0 & \xi \\
 \xi & 0
\end{pmatrix}.
\end{align}
We define the (unbounded) admissible domain $\Gamma := \, ]-\infty,0[\,$ and write
\begin{align}
\mathbb{1}_{2}-\frac{\mathcal{D}}{E}=\begin{pmatrix}
1 & -\sgn \\
 -\sgn & 1
\end{pmatrix}=\begin{pmatrix}
1 & 1 \\
 1 & 1
\end{pmatrix}\mathbf{1}_\Gamma+\begin{pmatrix}
1 & -1 \\
 -1 & 1
\end{pmatrix}\mathbf{1}_{\Gamma^c}
= (\mathbb{1}_{2}+\alpha_1) \mathbf{1}_\Gamma + (\mathbb{1}_{2}-\alpha_1) \mathbf{1}_{\Gamma^c},
\end{align}
where the second equality holds for all $\xi\in\R\setminus\{0\}$, as we use the conventions 
$\frac{\mathcal{D}}{E}(0)=0$ and $\sgn(0)=0$ for the sign function $\sgn$ on $\R$. With this, the operator on the right-hand side of \eqref{Operator-m=0} reads
\begin{equation}
\label{operator-00}
\mathbf{1}_\Omega\Big[\Op_L\Big(\tfrac{1}{2}\tilde{\varphi}^{(b)}(\mathbb{1}_{2}+\alpha_1)\mathbf{1}_\Gamma\Big)+\Op_L\Big(\tfrac{1}{2}\tilde{\varphi}^{(b)}(\mathbb{1}_{2}-\alpha_1)\mathbf{1}_{\Gamma^c}\Big)\Big]\mathbf{1}_\Omega.
\end{equation}
We note that the symbol $\tilde{\varphi}^{(b)}$ is smooth, bounded and compactly supported. Therefore, the symbols $A_1:=\tfrac{1}{2}\tilde{\varphi}^{(b)}(\mathbb{1}_{2}+\alpha_1)$ and $A_2:=\tfrac{1}{2}\tilde{\varphi}^{(b)}(\mathbb{1}_{2}-\alpha_1)$ are also smooth, bounded and compactly supported. An application of \cite[Theorem 4.22, Remark 4.23]{BM-Widom} yields the asymptotic expansion
\begin{align}
	\label{asymptotics-00}
	\tr_{L^2(\R)\otimes\C^2}\Big[h\Big(\mathbf{1}_\Omega\Op_L\Big(\tfrac{1}{2} & 
			\tilde{\varphi}^{(b)}\big((\mathbb{1}_{2}+\alpha_1)\mathbf{1}_\Gamma
				+(\mathbb{1}_{2}-\alpha_1)\mathbf{1}_{\Gamma^c}\big)\Big)\mathbf{1}_\Omega\Big)\Big] \nonumber \\ 
	& = L \;\mathfrak{W}_0\Big(\tr_{\C^2}\big[h\big(\tfrac{1}{2}\tilde{\varphi}^{(b)}( \mathbb{1}_{2}+\alpha_1)\big)\big];
			\Omega,\Gamma\Big) \nonumber\\
	&\quad + L \; \mathfrak{W}_0\Big(\tr_{\C^2}\big[h\big(\tfrac{1}{2}\tilde{\varphi}^{(b)}( \mathbb{1}_{2}-\alpha_1)\big)\big];
			\Omega,\Gamma^c\Big) \nonumber\\
	&\quad + \log L \; \mathfrak{W}_1\Big(\mathfrak{U}\big(h;\tfrac{1}{2}\tilde{\varphi}^{(b)}( \mathbb{1}_{2} +
			\alpha_1),\tfrac{1}{2}\tilde{\varphi}^{(b)}( \mathbb{1}_{2}-\alpha_1)\big);\partial\Omega,\partial\Gamma\Big)\nonumber\\ 
	&\quad +o(\log L),
\end{align}
as $L\rightarrow\infty$. 
We infer from the eigenvalues of the matrices $\mathbb{1}_{2}+\alpha_1$, $\mathbb{1}_{2}-\alpha_1$ and $\mathbb{1}_{2}+(2t-1)\alpha_1$ that
\begin{align}
\tr_{\C^2}\big[h\big(\tfrac{1}{2}\tilde{\varphi}^{(b)}(\mathbb{1}_{2}+\alpha_1) \big)\big]=h(\tilde{\varphi}^{(b)})=\tr_{\C^2}\big[h\big(\tfrac{1}{2}\tilde{\varphi}^{(b)}(\mathbb{1}_{2}-\alpha_1) \big)\big]
\end{align}
and
\begin{equation}
	\tr_{\C^2}\big[h\big(\tfrac{1}{2}\tilde{\varphi}^{(b)}(\mathbb{1}_{2}+\alpha_1)t +
		\tfrac{1}{2}\tilde{\varphi}^{(b)}(\mathbb{1}_{2}-\alpha_1)(1-t)\big)\big]
	= h\big(\tilde{\varphi}^{(b)}t\big)+h\big(\tilde{\varphi}^{(b)}(1-t)\big)	
\end{equation}
so that 
\begin{align}\label{E_F=m=0_U-Rechnung}
	\mathfrak{U} & \Big(h;\tfrac{1}{2}\tilde{\varphi}^{(b)}( \mathbb{1}_{2}+\alpha_1),\tfrac{1}{2}\tilde{\varphi}^{(b)}
		( \mathbb{1}_{2}-\alpha_1)\Big) \nonumber \\
	& = \frac{1}{(2\pi)^2} \int_0^1 \frac{h\big(\tilde{\varphi}^{(b)}t\big)+h\big(\tilde{\varphi}^{(b)}(1-t)\big)
			- h\big(\tilde{\varphi}^{(b)}\big)t-h\big(\tilde{\varphi}^{(b)}\big)(1-t)}{t(1-t)}\;\d t \nonumber\\ 
	& = 2\; \mathfrak{A}\big(h;\tilde{\varphi}^{(b)}\big).
\end{align}
Thus, we obtain the volume coefficient
\begin{multline}
	\label{W0-00}
	\mathfrak{W}_0\Big(\tr_{\C^2}\big[h\big(\tfrac{1}{2}\tilde{\varphi}^{(b)}( \mathbb{1}_{2}+\alpha_1)\big)\big];\Omega,\Gamma\Big)
	+ \mathfrak{W}_0\Big(\tr_{\C^2}\big[h\big(\tfrac{1}{2}\tilde{\varphi}^{(b)}(\mathbb{1}_{2}-\alpha_1)\big)\big];\Omega,
			\Gamma^c\Big) \\ 
	= \frac{|\Omega|}{2\pi} \int_{\R} \big(h(\tilde{\varphi}^{(b)})\big)(\xi)\;\d\xi 
	= |\Omega| V_{0}(h,b)
\end{multline}
and the coefficient of the enhanced area term 
\begin{multline}
	\label{W1-00}
	\mathfrak{W}_1\Big(\mathfrak{U}\big(h;\tfrac{1}{2}\tilde{\varphi}^{(b)}( \mathbb{1}_{2}+\alpha_1),\tfrac{1}{2}
		\tilde{\varphi}^{(b)}( \mathbb{1}_{2}-\alpha_1)\big);\partial\Omega,\partial\Gamma\Big)
	= 2\; \mathfrak{W}_1\big(\mathfrak{A}\big(h;\tilde{\varphi}^{(b)}\big);\partial\Omega,\partial\Gamma\big) \\ =\frac{\Phi(\Omega,0,0)}{(2\pi)^2}\int_0^1\frac{h(t)-h(1)t}{t(1-t)}\d t  \ =W(h,\Omega,0,0),
\end{multline}
where we used \eqref{frakW-Phi} together with $\tilde{\varphi}^{(b)}|_{\partial\Gamma}=\tilde{\varphi}^{(b)}(0)=1$
for the second equality.
Now, the lemma follows from \eqref{W1-00}, \eqref{W0-00}, \eqref{asymptotics-00} and \eqref{operator-00}.
\end{proof}

\begin{rem}
Upon comparing the proof of Lemma~\ref{Lemma_E_F=m=0_d=1} with the proofs of Lemma~\ref{Lemma_-E_F<-m} and~\ref{Lemma_E_F>m} in $d=1$ and recalling $n_{1}=2$, we observe an additional factor of 2 in \eqref{E_F=m=0_U-Rechnung} as compared to 
\eqref{Lemma3.3-frakU} or \eqref{U-E_F>m}. On the other hand, we have $|\partial\Gamma|=1$ in the proof of Lemma~\ref{Lemma_E_F=m=0_d=1}, whereas $|\partial\Gamma|=2$ in the proofs of Lemma~\ref{Lemma_-E_F<-m} and~\ref{Lemma_E_F>m}.
Still, this allows to write the coefficient of the enhanced area term in Lemma~\ref{Lemma_E_F=m=0_d=1} in the same way 
as in Lemma~\ref{Lemma_-E_F<-m} and~\ref{Lemma_E_F>m} by attributing the additional factor of $2$ from \eqref{E_F=m=0_U-Rechnung} to the geometric factor $\Phi(\Omega,0,0)$ in the second equality of \eqref{W1-00}.
\end{rem}

Next we consider the higher-dimensional case. Here we want to show that the discontinuity at the origin does not alter the situation in Section \ref{subsec:area}. We proceed as in Section \ref{subsec:area} and apply \cite[Lemma 4.19]{BM-Widom}. The difference is the estimate for the Schatten-von Neumann norm, as the result from the previous paper \cite[Lemma 3.3]{BM-Widom} requires a smooth symbol and therefore does not apply now. Instead we prove a separate Schatten-von Neumann estimate for appropriate symbols, whose proof builds upon the estimates in \cite{sobolevschatten}. A related estimate was recently obtained in \cite[Lemma 4.3]{FinsterSob}. Both estimates were obtained independently of one another.
\begin{lem}\label{Schatten-Estimate}
Let $\Omega$ be a bounded admissible domain and $E_{F}=m=0$. Then, there exists a constant $C>0$, which is independent of $L$ and $b$,
such that for every $L\geq 2$ and $q\in\,]0,1]$
\begin{align}
\big\|\mathbf{1}_\Omega \Op_L \big(\chi_{0}^{(b)}(\mathcal{D})\big)\mathbf{1}_{\Omega^c} \big\|_q^q \leq C \LEFTRIGHT\{.{\begin{array}{ll}
     		 \log [(b+1)L] \ & \text{if } d=1, \\[1ex]
					\displaystyle
     			 [(b+1)L]^{d-1}
							& \text{if } 	\  d\geq 2.
   		 \end{array}}
\end{align}
\end{lem}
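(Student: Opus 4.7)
The strategy is to decompose the symbol into a smooth scalar part and a matrix-valued part carrying the point discontinuity at the origin, then to handle the singular part via a dyadic decomposition in momentum space that exploits the degree-zero homogeneity of $\mathcal{D}/E$. Specializing \eqref{Formula-Chi+,Chi-} to $E_F=m=0$ gives $\chi_0^{(b)}(\mathcal{D})=\tfrac12\tilde\varphi^{(b)}\mathbb{1}_{n_d}-\tfrac12\tilde\varphi^{(b)}\,\mathcal{D}/E$ Lebesgue-almost everywhere. A change of variables $\xi=(b+1)\eta$ in \eqref{Definition-OpL} rewrites $\Op_L(\chi_0^{(b)}(\mathcal{D}))=\Op_M(\hat a)$, where $M:=(b+1)L$ and $\hat a(\eta):=\chi_0^{(b)}(\mathcal{D})((b+1)\eta)$ is supported in $B_1$; by homogeneity of $\mathcal{D}/E$, the singular behaviour of $\hat a$ at the origin is independent of $b$. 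Thus it suffices to prove the quasinorm bound on the scale $M$ with a symbol supported in $B_1$.

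The scalar summand $\tfrac12\tilde\varphi^{(b)}((b+1)\cdot)\mathbb{1}_{n_d}$ is smooth and compactly supported in $B_1$, so \cite[Lemma~3.3]{BM-Widom} applies directly and contributes $O(M^{d-1})$, uniformly in $b,L$. The matrix-valued summand $\tfrac12\tilde\varphi^{(b)}((b+1)\cdot)\,\mathcal{D}/E$ is treated by a dyadic decomposition around the origin: fix $\psi\in C_c^\infty(\R^d)$ supported in $\{1/2\le|\zeta|\le 2\}$ with $\sum_{k\ge 0}\psi(2^k\zeta)=1$ on $\{0<|\zeta|\le 1\}$, and split the summand into the pieces $\psi(2^k\,\cdot)\,\tilde\varphi^{(b)}((b+1)\cdot)\,\mathcal{D}/E$. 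For all $k$ beyond a fixed threshold, the cutoff $\tilde\varphi^{(b)}((b+1)\cdot)$ equals $1$ on the support of $\psi(2^k\,\cdot)$. Using the degree-zero homogeneity of $\mathcal{D}/E$, the rescaling $\eta\mapsto 2^{-k}\zeta$ turns each such piece into the \emph{fixed} smooth matrix symbol $\psi(\zeta)\,\mathcal{D}/E(\zeta)$ on the annulus $\{1/2\le|\zeta|\le 2\}$, while $\Op_M$ becomes $\Op_{M/2^k}$. Thus \cite[Lemma~3.3]{BM-Widom} gives a uniform bound $C(M/2^k)^{d-1}$ per piece, valid as long as $M/2^k\ge 1$.

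The desired estimate follows by summing via the quasi-triangle inequality $\|\sum_kT_k\|_q^q\le\sum_k\|T_k\|_q^q$ valid for $q\in\,]0,1]$: in $d\ge 2$ the geometric series $\sum_{k=0}^{\lfloor\log_2M\rfloor}(M/2^k)^{d-1}$ is dominated by its first term and yields $O(M^{d-1})$; in $d=1$ the sum reduces to counting $O(\log M)$ dyadic scales, in agreement with the claim. The main obstacle is the contribution of the very small-scale pieces with $k>\log_2 M$, where \cite[Lemma~3.3]{BM-Widom} no longer gives the right order: these must be handled by a direct Hilbert--Schmidt estimate on $\mathbf{1}_\Omega\,\Op_{M/2^k}(\psi\,\mathcal{D}/E)\,\mathbf{1}_{\Omega^c}$, exploiting the admissibility of $\Omega$ through the Lipschitz measure bound $|\{x\in\Omega:\dist(x,\Omega^c)<\delta\}|\le C\delta\,|\partial\Omega|$ together with the boundedness and smooth-off-origin structure of the rescaled symbol, so that the aggregate of these small-scale contributions is absorbed into the stated bound.
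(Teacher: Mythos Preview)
Your dyadic strategy is close in spirit to the paper's proof, which also localises in momentum space but via a H\"ormander-type partition of unity subordinate to balls of radius $\tau(\xi)=\tfrac14\sqrt{L^{-2}+|\xi|^2}$, without any preliminary rescaling by $(b{+}1)$. The paper then invokes \cite[Thm.~3.1]{sobolevschatten} for the finitely many balls touching the origin and \cite[Lemma~3.3]{BM-Widom} for the rest, summing the latter via the finite-intersection property into the integral $\int_{B_{2(b+1)}}\tau(\xi)^{-1}\,\d\xi$. Two steps in your version are not justified as stated.

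First, the claimed $b$-uniformity for the scalar cutoff and for the outermost dyadic shells of the matrix part fails: after your rescaling, $\tilde\varphi^{(b)}((b{+}1)\,\cdot)$ transitions from $1$ to $0$ on an annulus of width $1/(b{+}1)$ near $|\eta|=1$, so its derivatives of order $|\alpha|$ are of size $(b{+}1)^{|\alpha|}$. A single application of \cite[Lemma~3.3]{BM-Widom} at unit scale therefore yields a constant that blows up with $b$, and the same problem affects the small-$k$ dyadic pieces on which this cutoff is not yet constant (your ``fixed threshold'' is not uniform in $b$). Second, a Hilbert--Schmidt estimate does not control the $q$-Schatten quasinorm for $q\in\,]0,1]$, so your proposed treatment of the small-scale tail $k>\log_2 M$ is insufficient. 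The paper handles the near-origin part by factoring $\mathbf{1}_\Omega\Op_L(\psi_j\chi_0^{(b)}(\mathcal{D}))\mathbf{1}_{\Omega^c}$ through $\Op_L^l(\phi\psi_j\otimes\mathbb{1}_{n_d})$ and the bounded operator $\Op_L(\chi_0^{(b)}(\mathcal{D}))$, then bounding $\|\Op_L^l(\phi\psi_j)\|_q^q$ directly via \cite[Thm.~3.1]{sobolevschatten}, a genuine $q$-quasinorm bound exploiting that $\phi\psi_j$ occupies bounded phase-space volume; no boundary-layer Lipschitz estimate enters.
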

\begin{proof}
Let $L\geq 2$ and define the function $\tau:\R^d\rightarrow\R_+$ by 
\begin{equation}
\label{tau-xi-def}
\tau(\xi):=\frac{1}{4}\sqrt{\frac{1}{L^2}+|\xi|^2}.
\end{equation}
It is clearly Lipschitz continuous on $\R^d$ with Lipschitz constant $\frac{1}{4}$. By \cite[Thm.1.4.10]{Hoermander} this function gives rise to a sequence of centres $(\xi_j)_{j\in\N}\subset \R^d$ such that the balls of radius $\tau_j:=\tau(\xi_j)$ about the points $\xi_j$ cover $\R^d$, i.e. $\bigcup_{j\in\N}B_{\tau_{j}}(\xi_j)=\R^d$ and such that at most $N<\infty$ balls intersect in any given point, where the number $N$ only depends on the dimension $d$. We refer to this last property as the finite-intersection property. Furthermore by \cite[Thm.1.4.10]{Hoermander}, there is a partition of unity $(\psi_j)_{j\in\N}$ subordinate to this covering, and for every multi-index $\alpha\in\N_{0}^d$ there exists a constant $\tilde C_{|\alpha|}>0$ such that  
\begin{align}\label{Psi-Derivative}
\sup_{\;j\in\N^{\phantom{d}}} \!\sup_{\xi\in\R^d}|\partial^{\alpha}_{\xi}\psi_j(\xi)|\leq \tilde C_{|\alpha|}\tau(\xi)^{-|\alpha|}.
\end{align}
Here, $|\alpha| := \sum_{k=1}^{d}\alpha_{k}$. We note the (trivial) fact that $(\psi_j)_{j\in\N}$ is independent of $b$. The Lipschitz continuity of $\tau$ guarantees that for every $j\in \N$ and every $\xi\in B_{\tau_j}(\xi_j)$ the following inequality holds
\begin{align}\label{Lipschitz-Property}
\frac{4}{5}\tau(\xi)\leq \tau_j \leq \frac{4}{3} \tau(\xi).
\end{align}
By the definition of $\chi_{0}^{(b)}$ and \eqref{Dirac-Application-Function}, we have 
\begin{equation}
	\label{chi-b-supp}
	\supp\big(\chi_{0}^{(b)}(\mathcal{D})\big) \subset B_{b+1}(0).
\end{equation}
Therefore, there is a finite index set $J\subset\N$, which depends on $b$, such that $\bigcup_{j\in J}B_{\tau_{j}}(\xi_j)\supseteq \supp( \chi_{0}^{(b)}(\mathcal{D}))$. We divide $J$ into two (finite) parts
\begin{align}
J_1:=\{j\in J : B_{\tau_{j}}(\xi_j)\cap \{0\}\neq \emptyset\} \qquad \text{and} \qquad J_2:= J\setminus J_1.
\end{align}
The finite-intersection property implies the $b$-independent upper bound $|J_{1}| \le N$.
Using the $q$-triangle inequality, we estimate
\begin{multline}
	\label{J1-J2-decomp}
	\big\|\mathbf{1}_\Omega \Op_L \big(\chi_{0}^{(b)}(\mathcal{D})\big)\mathbf{1}_{\Omega^c} \big\|_q^q  \\
	\leq \sum_{j\in J_1} \big\|\mathbf{1}_\Omega \Op_L \big(\psi_j\chi_{0}^{(b)}(\mathcal{D})\big)\mathbf{1}_{\Omega^c} 
		\big\|_q^q 
	+ \sum_{j\in J_2} \big\|\mathbf{1}_\Omega \Op_L \big(\psi_j\chi_{0}^{(b)}(\mathcal{D})\big)\mathbf{1}_{\Omega^c} 
		\big\|_q^q.
\end{multline}
If $j\in J_1$ we estimate 
\begin{equation}\label{Estimate-J1}
\big\|\mathbf{1}_\Omega \Op_L \big(\psi_j\chi_{0}^{(b)}(\mathcal{D})\big)\mathbf{1}_{\Omega^c} \big\|_q^q 
\leq \big\| \Op_L^l \big(\phi\psi_j\otimes\mathbb{1}_{n_d})\big\|_q^q \;
\big\|\Op_L \big(\chi_{0}^{(b)}(\mathcal{D})\big)\big\|^q ,
\end{equation}
where $\phi\in C_c^\infty(\R^d)$ with $\phi|_{\Omega}=1$ is supported in the ball $B_{r_1}(0)$ for some radius $r_1>0$ and $\Op_L^l$ denotes the standard left-quantisation functor, cf.\ \cite[Section 2.2, 2.3]{BM-Widom}. As 
$\big\|\Op_L \big(\chi_{0}^{(b)}(\mathcal{D})\big)\big\| = 1$ and 
\begin{align}\label{Estimate-J1-2}
\big\| \Op_L^l \big(\phi\psi_j\otimes \mathbb{1}_{n_d})\big\|_q^q=n_d\big\| \Op_L^l \big(\phi\psi_j\big)\big\|_q^q,
\end{align}
it remains to estimate the $q$-norm for the scalar-valued operator on the right-hand side of \eqref{Estimate-J1-2}.
As $0 \in B_{\tau_j}(\xi_j)$ for all $j\in J_1$, property \eqref{Lipschitz-Property} guarantees that $\phi\psi_j$ is compactly supported in $B_{r_1}(0)\times B_{\frac{1}{3L}}(\xi_j)$. Therefore, by an  application of \cite[Thm 3.1]{sobolevschatten}, the right-hand side of \eqref{Estimate-J1-2} is bounded from above by a constant $C_{1}$ independently of $L$ and $b$. This, in turn, provides the bound
\begin{equation}
	\label{J1-sum}
	\sum_{j\in J_1} \big\|\mathbf{1}_\Omega \Op_L \big(\psi_j\chi_{0}^{(b)}(\mathcal{D})\big)\mathbf{1}_{\Omega^c} 
		\big\|_q^q 
	\le N n_{d} C_{1},
\end{equation}
where the right-hand side is independent of $L$ and $b$.

If instead $j\in J_2$ we note that the symbol $\psi_j\chi_{0}^{(b)}(\mathcal{D})$ is smooth with $\supp \psi_j\chi_{0}^{(b)}(\mathcal{D}) \subset B_{\tau_j}(\xi_j) \subset B_{4\tau_j}(\xi_j)$. Since $4L\tau_j\geq 1$ we can apply \cite[Lemma 3.3, Remark 3.4]{BM-Widom} with $A=\phi\psi_j\chi_{0}^{(b)}(\mathcal{D})$, $t=l$, $s=\max\{r_1,1\}$ and $\tau=4\tau_j$ in combination with \eqref{Psi-Derivative} to obtain
\begin{equation}
\big\|\mathbf{1}_\Omega \Op_L \big(\psi_j\chi_{0}^{(b)}(\mathcal{D})\big)\mathbf{1}_{\Omega^c} \big\|_q^q = \big\|\mathbf{1}_\Omega \Op_L^l \big(\phi\psi_j\chi_{0}^{(b)}(\mathcal{D})\big)\mathbf{1}_{\Omega^c} \big\|_q^q \leq C_{2} (L\tau_j)^{d-1},
\end{equation}
where the constant $C_{2}>0$ is independent of $j\in J_{2}$, as well as of $L$ and $b$.
Therefore, 
\begin{equation}\label{Estimate-Finite-Intersection}
\sum_{j\in J_2} \big\|\mathbf{1}_\Omega \Op_L \big(\psi_j\chi_{0}^{(b)}(\mathcal{D})\big)\mathbf{1}_{\Omega^c} \big\|_q^q \leq C_{2}L^{d-1} \sum_{j\in J_2} \tau_j^{d-1}\leq C_3L^{d-1}\int_{B_{2(b+1)}(0)} \tau(\xi)^{-1} \d\xi,
\end{equation}
where the constant $C_{3}>0$ is independent of $L$ and $b$. The last inequality relies on the finite-intersection property and \eqref{Lipschitz-Property}, as well as the definition of the index set $J$, \eqref{tau-xi-def} and \eqref{chi-b-supp}.  Using the abbreviation $\rho:= 2(b+1)$ and introducing spherical coordinates, the right-hand side of 
\eqref{Estimate-Finite-Intersection} reads 
\begin{align}
 C_3 |\partial B_{1}(0)| \int_0^{L\rho}  \frac{r^{d-1}}{\sqrt{1+r^2}} \; \d r \leq C_{4} \LEFTRIGHT\{.{\begin{array}{@{\,}ll}
     		 \log (\rho L) \ & \text{if } d=1, \\[1ex]
					\displaystyle
     			 (\rho L)^{d-1}
							& \text{if } 	\  d\geq 2,
   		 \end{array}}
\end{align}
for some constant $C_{4} >0$ that is independent of $L$ and $b$.
This estimate, together with \eqref{Estimate-Finite-Intersection}, \eqref{J1-sum} and \eqref{J1-J2-decomp} 
concludes the proof due to $(b+1)L \ge 2$.
\end{proof}

\noindent
With this, the desired estimate follows immediately.

\begin{lem}\label{Lemma_E_F=m=0_d>1}
Let $h$ be as in Theorem \ref{Dirac}, $\Omega$ be a bounded admissible domain, $b\geq 0$, $E_F=m=0$ and $d\geq 2$. Then, for every $L\geq 2$, we have
\begin{align}
\Big\| h\Big(\mathbf{1}_\Omega \Op_L\big(\chi_{0}^{(b)}(\mathcal{D})\big)\mathbf{1}_\Omega\Big)-\mathbf{1}_\Omega \Op_L\Big(h\big(\chi_{0}^{(b)}(\mathcal{D})\big)\Big)\mathbf{1}_\Omega\Big\|_1 \leq CL^{d-1},
\end{align}
where the constant $C>0$ is independent of $L$.
\end{lem}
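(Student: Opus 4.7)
The plan is to imitate the strategy of Lemma \ref{Lemma_E_F=<m>0}: reduce the trace-norm difference to a Schatten-von Neumann quasi-norm of the off-diagonal block $\mathbf{1}_\Omega \Op_L\big(\chi_0^{(b)}(\mathcal{D})\big)\mathbf{1}_{\Omega^c}$ via \cite[Lemma 4.19]{BM-Widom}, and then control that quasi-norm. The essential difference from the case $|E_F|\le m\neq 0$ is that here the symbol $\chi_0^{(b)}(\mathcal{D})$ carries a genuine, non-removable discontinuity at the origin, so the smoothness-based estimate \cite[Lemma 3.3]{BM-Widom} used in Lemma \ref{Lemma_E_F=<m>0} is unavailable. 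This is precisely the gap that has just been filled by Lemma \ref{Schatten-Estimate}, so the present lemma should follow by combining the two ingredients.

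More concretely, I would first note that since $\Op_L\big(\chi_0^{(b)}(\mathcal{D})\big)$ is a bounded operator (with norm independent of $L$), one may without loss of generality assume that $h$ is compactly supported. Picking any $q\in \,]0,\gamma[\,\cap \,]0,1]$ with $\gamma$ the H\"older exponent from Assumption \ref{H-Functions}, an application of \cite[Lemma 4.19]{BM-Widom} with $A=\Op_L\big(\chi_0^{(b)}(\mathcal{D})\big)$ and $P=\mathbf{1}_\Omega$ gives
\begin{equation*}
\Big\| h\Big(\mathbf{1}_\Omega \Op_L\big(\chi_0^{(b)}(\mathcal{D})\big)\mathbf{1}_\Omega\Big) - \mathbf{1}_\Omega \Op_L\Big(h\big(\chi_0^{(b)}(\mathcal{D})\big)\Big)\mathbf{1}_\Omega\Big\|_1 \le C_1 \big\|\mathbf{1}_\Omega \Op_L\big(\chi_0^{(b)}(\mathcal{D})\big)\mathbf{1}_{\Omega^c}\big\|_q^q,
\end{equation*}
with $C_1>0$ independent of $L$. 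Since $d\geq 2$, Lemma \ref{Schatten-Estimate} bounds the right-hand side by $C_1 C \, [(b+1)L]^{d-1}$; as $b$ is fixed, this gives the claimed $L^{d-1}$ bound.

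I do not anticipate a substantial obstacle: the technical work has already been done in Lemma \ref{Schatten-Estimate}, where the singularity at the origin is handled by a localisation argument around $\xi=0$ combined with Sobolev's Schatten-norm estimate for smooth symbols away from the origin. The only point meriting a brief verification is that $\chi_0^{(b)}$ (being bounded with range in $[0,1]$) produces an operator to which \cite[Lemma 4.19]{BM-Widom} truly applies for the relevant range of $q$, and that the H\"older hypothesis on $h$ at the endpoints $0$ and $1$ (implicit in Assumption \ref{H-Functions} together with $h(0)=0$) matches what that lemma requires. After that, the two estimates combine directly to yield the statement.
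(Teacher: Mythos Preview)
Your proposal is correct and follows essentially the same route as the paper's own proof: reduce to compactly supported $h$, apply \cite[Lemma~4.19]{BM-Widom} with $A=\Op_L\big(\chi_0^{(b)}(\mathcal{D})\big)$ and $P=\mathbf{1}_\Omega$ for $q\in\,]0,\gamma[\,$, and then invoke Lemma~\ref{Schatten-Estimate} in dimension $d\ge 2$. The only cosmetic difference is your explicit restriction $q\le 1$, which is automatic since $\gamma\in\,]0,1[\,$.
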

\begin{proof}
As $\chi_{0}^{(b)}$ and thus the operator $\Op_L\big(\chi_{0}^{(b)}(\mathcal{D})\big)$ are bounded, we can assume that the function $h$ is compactly supported. An application of \cite[Lemma 4.19]{BM-Widom} with $A=\Op_L\big(\chi_{0}^{(b)}(\mathcal{D})\big)$, $P=\mathbf{1}_\Omega$ and arbitrary 
$q\in \,]0,\gamma[\,$ yields
\begin{align}
\Big\|h\Big(\mathbf{1}_\Omega \Op_L\big(\chi_{0}^{(b)}(\mathcal{D})\big)\mathbf{1}_\Omega\Big)-\mathbf{1}_\Omega \Op_L\Big(h\big(\chi_{0}^{(b)}(\mathcal{D})\big)\Big)\mathbf{1}_\Omega\Big\|_1
\leq C_1 \Big\|\mathbf{1}_\Omega \Op_L \big(\chi_{0}^{(b)}(\mathcal{D})\big)\mathbf{1}_{\Omega^c} \Big\|_q^q,
\end{align}
where the constant $C_1>0$ does not depend on the parameter $L$. Therefore, the claim follows from an
application of Lemma \ref{Schatten-Estimate} in the case $d\geq 2$.
\end{proof}

%
\subsection{Proof of the main result}
\label{subsec:proof}

All that remains to be done is to collect the previous statements.  

\begin{proof}[Proof of Theorem~\ref{Dirac}]
For fixed $L>0$ we rewrite the left-hand side of \eqref{Dirac-stat} as
\begin{multline}\label{Rewrite-Dirac}
		\frac{1}{n_{d}}\tr_{L^2(\R^d)\otimes\C^{n_{d}}} \!
			\bigg[h\Big(\mathbf{1}_{\Lambda}\Op_{L}\big(\chi_{E_F}^{(b)}(\mathcal{D})\big)\mathbf{1}_{\Lambda}\Big) 
			\\+ h\Big(\mathbf{1}_{\Lambda'\setminus\Lambda}\Op_{L}\big(\chi_{E_F}^{(b)}(\mathcal{D})\big)
				\mathbf{1}_{\Lambda'\setminus\Lambda}\Big)
			- h\Big(\mathbf{1}_{\Lambda'}\Op_{L}\big(\chi_{E_F}^{(b)}(\mathcal{D})\big)\mathbf{1}_{\Lambda'}\Big)\bigg] 
	\end{multline}
by \eqref{Rewrite-PDO} and \eqref{Rewrite-L-PDO}. As both $\Lambda$ and $\Lambda'$ are bounded piece-wise $C^1$-admissible domains, the domain $\Lambda'\setminus \Lambda$ is also a bounded piece-wise $C^1$-admissible domain up to a set of measure zero. 

Firstly, we treat the situations where there will be a logarithmic enhancement. 
So, we consider either the case $|E_F|>m$ or the case $E_F=m=0$ in $d=1$. 
In order to cover these simultaneously, we introduce the asymptotic volume coefficient 
\begin{equation}
	V(h,b,E_{F},m):= \LEFTRIGHT\{.{\begin{array}{@{\,}ll}V_{-}(h,b,E_{F},m), &  \text{if~~} E_{F} < -m, \\
				V_{+}(h,b,E_{F},m), &  \text{if~~} E_{F} > m, \\ V_{0}(h,b), & \text{if~~} E_{F}=m=0 \text{~~and~~} d=1.
			\end{array}}
\end{equation}
In the first case, we apply Lemma \ref{Lemma_-E_F<-m}, in the second case Lemma~\ref{Lemma_E_F>m}
and in the third case Lemma~\ref{Lemma_E_F=m=0_d=1} to all three terms in \eqref{Rewrite-Dirac}. 
This yields the asymptotic expansion 
\begin{multline}
	\label{almost-done}
	\frac12\, L^d V(h,b,E_{F},m) \Big(|\Lambda |+|\Lambda'\setminus\Lambda|-|\Lambda'|\Big)\\
	+ \frac{1}{2}\, L^{d-1}\log L  \Big( W(h,\Lambda,E_F,m) + W(h,\Lambda'\setminus\Lambda,E_F,m) 
			- W(h,\Lambda',E_F,m)\Big)
	+ o(L^{d-1}\log L)
\end{multline}
of \eqref{Rewrite-Dirac}, as $L\rightarrow\infty$.
Now, the volumes in the first line of \eqref{almost-done} add up to zero,
and by the definition of $W(h,\Lambda,E_F,m)$ in Theorem~\ref{Dirac} we have 
\begin{equation}
	W(h,\Lambda,E_F,m) + W(h,\Lambda'\setminus\Lambda,E_F,m) - W(h,\Lambda',E_F,m)
	= 2W(h,\Lambda,E_F,m).
\end{equation}
Thus, we infer the desired result 
\begin{multline}
		\frac{1}{n_{d}}\tr_{L^2(\R^d)\otimes\C^{n_{d}}} \!
			\Big[h\big(\mathbf{1}_{\Lambda_{L}}\chi_{E_F}^{(b)}(D)\mathbf{1}_{\Lambda_{L}}\big) 
			+ h\big(\mathbf{1}_{\Lambda'_{L}\setminus\Lambda_{L}}\chi_{E_F}^{(b)}(D)
				\mathbf{1}_{\Lambda'_{L}\setminus\Lambda_{L}}\big)
			- h\big(\mathbf{1}_{\Lambda'_{L}}\chi_{E_F}^{(b)}(D)\mathbf{1}_{\Lambda'_{L}}\big)\Big] \\ 
		= L^{d-1}\log L \  W(h,\Lambda,E_F,m) + o(L^{d-1}\log L),
\end{multline}
as $L\rightarrow\infty$.

Secondly, we turn to the situations without logarithmic enhancement, that is, 
the case $|E_F|\leq m \neq 0$ and the case 
$E_{F} = m =0$ in $d\geq 2$. We observe that for fixed $L>0$ we have
\begin{multline}\label{Lambda-Lambda'-Comparison}
\mathbf{1}_{\Lambda} \Op_L\Big(h\big(\chi_{E_F}^{(b)}(\mathcal{D})\big)\Big)\mathbf{1}_{\Lambda}+\mathbf{1}_{\Lambda'\setminus\Lambda} \Op_L\Big(h\big(\chi_{E_F}^{(b)}(\mathcal{D})\big)\Big)\mathbf{1}_{\Lambda'\setminus\Lambda}-\mathbf{1}_{\Lambda'} \Op_L\Big(h\big(\chi_{E_F}^{(b)}(\mathcal{D})\big)\Big)\mathbf{1}_{\Lambda'} \\ = -\mathbf{1}_{\Lambda} \Op_L\Big(h\big(\chi_{E_F}^{(b)}(\mathcal{D})\big)\Big)\mathbf{1}_{\Lambda'\setminus\Lambda}-\mathbf{1}_{\Lambda'\setminus\Lambda}\Op_L\Big(h\big(\chi_{E_F}^{(b)}(\mathcal{D})\big)\Big)\mathbf{1}_{\Lambda}.
\end{multline}
Both operators in the last line of \eqref{Lambda-Lambda'-Comparison} have vanishing trace. Therefore, we can rewrite \eqref{Rewrite-Dirac} as 
\begin{align}\label{Lambda-Lambda'-Reduced}
	\frac{1}{n_{d}}\tr_{L^2(\R^d)\otimes\C^{n_{d}}} \!
		\bigg[&h\Big(\mathbf{1}_{\Lambda}\Op_{L}\big(\chi_{E_F}^{(b)}(\mathcal{D})\big)
				\mathbf{1}_{\Lambda}\Big)
		- \mathbf{1}_{\Lambda} \Op_L\Big(h\big(\chi_{E_F}^{(b)}(\mathcal{D})\big)\Big)
				\mathbf{1}_{\Lambda} \nonumber \\
	& + h\Big(\mathbf{1}_{\Lambda'\setminus\Lambda}\Op_{L}\big(\chi_{E_F}^{(b)}(\mathcal{D})\big)
				\mathbf{1}_{\Lambda'\setminus\Lambda}\Big)
			- \mathbf{1}_{\Lambda'\setminus\Lambda} \Op_L\Big(h\big(\chi_{E_F}^{(b)}(\mathcal{D})
				\big)\Big)\mathbf{1}_{\Lambda'\setminus\Lambda}\nonumber \\ 
	& - h\Big(\mathbf{1}_{\Lambda'}\Op_{L}\big(\chi_{E_F}^{(b)}(\mathcal{D})\big)
				\mathbf{1}_{\Lambda'}\Big)+\mathbf{1}_{\Lambda'} \Op_L\Big(h\big(\chi_{E_F}^{(b)}
				(\mathcal{D})\big)\Big)\mathbf{1}_{\Lambda'}\bigg]
\end{align}
and estimate the trace norm of the operator difference in each of the three lines  
by Lemma~\ref{Lemma_E_F=<m>0} in the case $|E_F|\leq m \neq 0$ and by 
Lemma~\ref{Lemma_E_F=m=0_d>1} in the case $E_{F} = m =0$ and $d\geq 2$. 
This yields the existence of a constant $C>0$ such that the modulus of 
\eqref{Lambda-Lambda'-Reduced} is bounded from above by $CL^{d-1}$ for all $L\geq 2$, 
and the proof is complete.
\end{proof}

\section*{Acknowledgement}
We thank Edgardo Stockmeyer for stimulating and helpful discussions. This work was partially supported by the Deutsche 
Forschungsgemeinschaft (DFG, German Research Foundation) -- TRR 352 ``Mathematics of Many-Body Quantum Systems and Their 
Collective Phenomena" -- Project-ID 470903074.


\IfFileExists{mybib.bst}{\bibliographystyle{mybib}}{\bibliographystyle{../mybib}}
\IfFileExists{references.bib}{\bibliography{references-Ruth,references}}{\bibliography{../references-Ruth,../references}}

\end{document}